\newtheorem{theorem}{Theorem}
\newtheorem{lemma}[theorem]{Lemma}
\newtheorem{corollary}[theorem]{Corollary}
\newtheorem{definition}[theorem]{Definition}
\title{\LARGE \bf
Information Flow for Security in Control Systems
}
\author{Sean Weerakkody~~~~~Bruno Sinopoli~~~~~Soummya Kar~~~~~Anupam Datta}
\begin{document}
\maketitle
\begin{abstract}

This paper considers the development of information flow analyses to support resilient design and active detection of adversaries in cyber physical systems (CPS). The area of CPS security, though well studied, suffers from fragmentation. In this paper, we consider control systems as an abstraction of CPS. Here, we extend the notion of information flow analysis, a well established set of methods developed in software security, to obtain a unified framework that captures and extends system theoretic results in control system security. In particular, we propose the Kullback Liebler (KL) divergence as a causal measure of information flow, which quantifies the effect of adversarial inputs on sensor outputs. We show that the proposed measure characterizes the resilience of control systems to specific attack strategies by relating the KL divergence to optimal detection techniques. We then relate information flows to stealthy attack scenarios where an adversary can bypass detection. Finally, this article examines active detection mechanisms where a defender intelligently manipulates control inputs or the system itself in order to elicit information flows from an attacker's malicious behavior. In all previous cases, we demonstrate an ability to investigate and extend existing results by utilizing the proposed information flow analyses.
\end{abstract}
\section{Introduction}
The security of cyber physical systems (CPS), which integrate sensing, communication, and control in physical spaces, has become a significant challenge in society \cite{challengessecurity,cardenas2009rethinking}. Because CPS now pervade our critical infrastructures including transportation, manufacturing, health care, and energy, and are often implemented using off the shelf components, they offer both motivation and opportunity for potential attackers. There exist precedence for attacks on CPS including Stuxnet \cite{Langner2013} and the Maroochy Shire incident \cite{Slay2008} .

The ability to  detect and characterize such attacks is paramount to the well being of CPS. In particular,  to deliver appropriate countermeasures for attacks on physical systems, the operator must be able to passively detect attacks in a timely manner as they occur. Moreover, the defender must understand the set of stealthy attacks to motivate resilient design and active detection.  Here, passive detection refers to the defender's use of available information to ascertain if the system is operating normally or under attack. Passive detection techniques against attacks in CPS have been well studied. For instance, traditional methods of fault detection \cite{jones_report,failuredetection} have been considered. However, such schemes are usually designed to deal with benign failures. Consequently, recent work has aimed to consider the detection of stealthy adversaries who perform integrity attacks on sensor measurements and control inputs \cite{Mo_robust_detection}, \cite{PasqualettiAttack}, \cite{sundaram_wirelesscontrol}. 

Despite this previous work, the detection of arbitrary attacks on CPS by adversaries with diverse information and capabilities has not been well categorized. In this article, we propose using information flows as a means to quantify the detectability of generic adversarial attack models. Information flows analysis is an establised set of tools in software security \cite{DenningD77}, which attempt to determine if the processes of one agent alter the processes of another agent. We intend to use information flow to develop a unified treatment of security in CPS, specifically focusing on dynamical control aspects in this paper while leaving general cyber-physical treatments to future work. 

In this article, we propose the KL divergence as a quantitative measure for information flow to determine the extent to which an attacker's inputs in a control system affect the system outputs. To complement this measure, we introduce notions of conditional $\epsilon$-weak information flows and conditional $\epsilon$-strong information flows. Here, conditional weak information flows characterize stealthy attack strategies conditioned on the system model and the defender's control policy. Moreover, conditional strong information flows define active defense strategies which enable detection of adversarial behavior, conditioned on the attacker's policy. The resulting framework and analysis allows us to recover, in a unified manner, a collection of prior results in cyber-physical control systems, obtained using different techniques, in a number of papers. Moreover, in certain cases our framework allows us to present refinements on existing results to reveal additional insights. We summarize these instances below.

First, in section \ref{sec:passive},  we directly leverage the results of \cite{Bai2015} to demonstrate that the KL divergence allows us to characterize optimal passive detectability by specifically relating this measure to the optimal decay rate of the probability of false alarm. Moreover, we show through residue analysis and information theoretic bounds that the KL divergence can, in many instances, be efficiently evaluated. 

Next, in section \ref{sec:stealthy}, we consider the study of conditional weak information flows where we additionally assume a defender chooses an arbitrary control policy. We show that an adversary can develop attacks which generate $0$ information flow if and only if the system is left invertible. This allows us to recover results applied by \cite{PasqualettiAttack} to analyze undetectable attack scenarios. In addition, we show, under certain constraints on adversarial policy, that the information flow is a quadratic function of the bias injected on  measurement residues. This allow us to recover results in \cite{moscs10security} and \cite{mo2012integrity} on false data injections which used the residue bias as a constraint when studying impacts of stealthy adversaries. We are able to refine these results by presenting optimal detection guarantees for adversaries that satisfy these constraints.

Finally, information flow analysis allows us to consider results in active detection where the defender changes system parameters \cite{Tex2012}, \cite{miao_coding}, \cite{WeerakkodyCDC2015} or the control policy itself \cite{Mo2009R,Chabukswar2014,Weerakkody2014,Mo2014CSM,miao2013} to detect an attack. We consider the specific case of replay attacks. Here, we are able to recover results which show that certain systems and control policies are vulnerable to replay attacks \cite{Mo2009R}.  However, unlike \cite{Mo2009R} which uses specific continuity arguments, we use our framework to demonstrate that replay attacks generate a conditional weak information flow. We then recover results which state that introducing physical watermarking to the defender's policy \cite{Chabukswar2014} enables detection of replay adversaries. We do this by directly proving such a policy yields a conditional strong information flow for replay attacks. We are able to extend previous results \cite{Chabukswar2014} by using the calculated information flow to directly evaluate the detectability of a replay attack in a system with physical watermarking.

To close, we note that \cite{Bai2015} also leverages results relating the KL divergence to optimal passive detectability in order to define the notion of an $\epsilon$-stealthy attack. This is subsequently used to analyze maximum estimation degradation by a stealthy adversary in a scalar system. Our paper proposes using the KL divergence not only as a tool to analyze specific attacks, but as a unifying measure to characterize attacks and defenses in control system security. We also argue that our proposed framework is more general. Specifically, the notion of conditional information flow allows us to both characterize how an adversarial policy can be tuned to avoid detection by specific defenders and consider how the defender can adjust the system or his control policy to actively detect an attacker. We will revisit \cite{Bai2015} in a more technical context later.

The rest of the paper is summarized as follows. In section \ref{sysmodel}, we describe the system model. In section \ref{attackmodel}, we introduce a general model of an adversary in a CPS. Next, in section \ref{sec:infoflow} we define an information flow in a CPS through the KL divergence and relate it to existing notions in software security. After, in section \ref{sec:passive}, we motivation information flow as a computable measure of optimal passive detectability. In section \ref{sec:stealthy}, we discuss stealthy attack scenarios. Then, in section \ref{sec:active}, we consider information flow in the context of active detection. We conclude the paper in section \ref{sec:conclusion}.

\section{System Model} \label{sysmodel}
We consider a control system with discrete linear time invariant model given below.
\begin{equation}
x_{k+1} = Ax_k + Bu_k + w_k,~~y_k = Cx_k + v_k. \label{eq:dynamics}
\end{equation}

Here $x_k \in \mathbb{R}^n$ is the state, $u_k \in \mathbb{R}^p$ is the set of control inputs and $y_k \in \mathbb{R}^m$ is the set of sensor outputs. We let $x_0$ be the initial state. Furthermore, $w_k \sim \mathcal{N}(0,Q)$ and $v_k \sim \mathcal{N}(0,R)$ are independent and identically distributed (IID) process and IID measurement noise respectively. We consider a finite horizon up to time $T$. 

The previous linear model of a system is leveraged to derive the ensuing results related to control system security. However, we stress that the paradigm of information flows, to be introduced, can consider general nonlinear and time varying dynamical systems. 

We let $\mathcal{I}_k$ be the information available to the defender at time $k$ after making a measurement. From the defender's perspective, the initial state is unknown. However, the defender knows that $f(x_0|\mathcal{I}_{-1}) = \mathcal{N}(\hat{x}_{0|-1},P_{0|-1})$.  The defender at time $-1$ is aware of the system model $\mathcal{M} = \{ A, B, C, Q, R, \hat{x}_{0|-1},P_{0|-1} \}$. In total the defender's information at time $k$ is given by 
\begin{equation}
\mathcal{I}_k = \{ y_{0:k}, u_{0:k-1}, \mathcal{M} \}.
\end{equation}
$y_{0:k}$ refers to the finite sequence $\{y_0, \cdots, y_k \}$. Therefore, the defender is a central entity having cumulative knowledge of the dynamics of the system and the history of outputs and inputs. We now define an admissible defender control strategy as follows.

\begin{definition}
An admissible defender control strategy is a sequence of deterministic measureable functions $ \{\mathcal{U}_0, \mathcal{U}_{1}, \cdots, \mathcal{U}_{T-1}\}$ where $\mathcal{U}_k : \mathcal{I}_k \rightarrow \mathbb{R}^p$ for all $k \in \{0,1, \cdots, T-1 \}$ and $u_k = \mathcal{U}_k(\mathcal{I}_k)$.
\end{definition}

As a result, the defender computes a deterministic function of the current information to generate an input. 
Finally, we assume that the defender implements some passive bad data detector to determine whether the system is operating normally, denoted by a null hypothesis $\mathcal{H}_0$, or if there exist an abnormality (or possible attack), denoted by a state of $\mathcal{H}_1$. We define an admissible detector as follows.
\begin{definition}
An admissible defender detector strategy is a sequence of deterministic measureable functions $\{\Psi_0, \Psi_{1}, \cdots, \Psi_{T-1}\}$ where $\Psi_k : \mathcal{I}_k \rightarrow \{\mathcal{H}_0,\mathcal{H}_1\}$ for all $k \in \{0,1, \cdots, T \}$.
\end{definition}

Thus at each time $k$, the defender intelligently constructs a function $\Psi_k$ which maps the defender's available information to a decision about the state of the system, whether it is operating normally or has faulty and/or malicious behavior. 
%
%

\section{Attack Model} \label{attackmodel}
We now introduce an adversarial environment where an attacker, depending on his capabilities, as well as knowledge of the system can manipulate control inputs or sensor measurements to degrade control and estimation performance. Here, we formulate an adversary's effect on a system by including additive attacker inputs $u_k^a$ and $d_k^a$ as follows.
\begin{align}
x_{k+1} &= A x_k + B u_k + B^au_k^a + w_k, \label{eq:attackdyn} \\
y_k &= C x_k + D^a d_k^a +  v_k. \label{eq:outputdyn}
\end{align}

$B^a$ characterizes the adversarial inputs, which could be a subset of actuators the attacker usurps from the defender, or his own inputs. Without loss of generality, we assume $B^a$ is full column rank. We assume the adversary can modify $m^\prime$ sensors, $\mathcal{S} = \{ \gamma_1,\cdots,\gamma_{m^\prime} \} \subseteq \{1, \cdots, m \}$. Therefore, we define $D^a \in \mathbb{R}^{m \times m^\prime}$ entrywise as
$D_{u,v}^a  =  \mathbf{1}_{u = \gamma_j, v = j}.$

It is assumed that $u_k^a \in \mathbb{R}^{p^{\prime}}$ and $d_k^a \in \mathbb{R}^{m^{\prime}}$ are unknown to the defender. Thus a defender can only measure an adversary's effect on a system through sensor readings. 

We represent the adversary's knowledge of the system at time $k$ as $\mathcal{I}_k^a$. Here, we assume at a minimum that $\{u_{0:k-1}^a, d_{0:k}^a\} \subset \mathcal{I}_k^a$. Thus, the adversary is aware of his own history. Moreover, the adversary may have the ability to read a subset of control inputs $u_k$ or sensor outputs $y_k$ from the defender. For instance, if the attacker can modify channels, he may also be able to intercept signals sent along these channels, thereby utilizing a man in the middle attack. The portion of inputs and outputs the attacker and defender can read are public and are denoted $u_k^{pu}, y_k^{pu}$. Finally, the adversary may have some imperfect prior knowledge of the plant $\hat{\mathcal{M}}$, the controller $\hat{\mathcal{C}}$, and the detector $\hat{\mathcal{D}}$. The adversary's information is 
\begin{equation}
\mathcal{I}_k^a = \{u_{0:k-1}^a, d_{0:k}^a, u_{0:k-1}^{pu}, y_{0:k}^{pu}, \hat{\mathcal{M}},\hat{\mathcal{C}},\hat{\mathcal{D}} \}.
\end{equation}

An admissible attack strategy leverages the attacker's information $\mathcal{I}_k^a$ to generate attack inputs for the system.
\begin{definition}
An admissible attack strategy on the plant is a sequence of deterministic measureable functions $\{\mathcal{U}_0^a, \mathcal{D}_0^a,  \cdots, \mathcal{U}_{T-1}^a, \mathcal{D}_{T-1}^a, \mathcal{D}_{T}^a \}$ where $\mathcal{U}_k^a : \mathcal{I}_k^a \rightarrow \mathbb{R}^{p \prime}$ for all $k \in \{0,1, \cdots, T-1 \}$ and $u_k^a = \mathcal{U}_k^a(\mathcal{I}_k^a)$. Additionally, $\mathcal{D}_k^a : \mathcal{I}_{k-1}^a \times y_k^{pu} \rightarrow \mathbb{R}^{m \prime}$ for all $k \in \{0,1, \cdots, T \}$    and $d_k^a = \mathcal{D}_k^a(\mathcal{I}_{k-1}^a, y_k^{pu})$.
\end{definition}

We note that while current state of the art adversarial models for control systems consider attackers who do not change their attack strategy, our model considers an attacker with the freedom to leverage all his information to construct an attack input.

\section{Information Flows in Physical Systems} \label{sec:infoflow}
In software security, an information flow exists from a private input to a public output if including the private input changes the behavior of the public output. We wish to extend this notion for adversarial inputs and sensor outputs of control systems. In this section we propose a means to quantify information flow to characterize the detectability of adversarial strategies.

We quantify the information flow through the KL divergence between the distribution of the output under attack and the distribution of the output under normal operation \cite{Cover:2006:EIT:1146355}. For definiteness, we assume that all discrete time stochastic processes of interest considered hereafter induce (joint) distributions on the path space that are absolutely continuous with respect to Lebesgue measure. Thus, they possess densities in the usual sense. The KL divergence between a distribution with probability  density function $p(x)$ and a distribution with probability density function $q(x)$ over a sample space $X$ is given by
\begin{equation}
D_{KL}( p(x) || q(x) ) = \int_X   \log\left(\frac{p(x)}{q(x)} \right) p(x) d x.
\end{equation}
The above definition can be generalized to probability measures \cite{kullback1968information}. The KL divergence has the following properties \cite{Cover:2006:EIT:1146355}.
\begin{enumerate}
\item $D_{KL}( p(x) || q(x) ) \ge 0$ .
\item $D_{KL}( p(x) || q(x) ) = 0$ if and only if $p(x) = q(x)$ almost everywhere.
\item $D_{KL}( p(x) || q(x) ) \neq D_{KL}( q(x) || p(x) )$.
\end{enumerate}
We now use the KL divergence to define information flows in a physical system. To begin, denote the conditional distribution of the output based on apriori information as follows.
\begin{equation*}
\mathbb{D}_{y_{0:k}}^{\mathcal{M}, \mathcal{U}_{0:k-1}, \mathcal{U}_{0:k-1}^a, \mathcal{D}_{0:k}^a} = f(y_{0:k} | \mathcal{I}_{-1}, \mathcal{U}_{0:k-1}, \mathcal{U}_{0:k-1}^a, \mathcal{D}_{0:k}^a).
\end{equation*}
\begin{definition}
The information flow from the attacker's inputs $(\mathcal{U}_{0:T-1}^a, \mathcal{D}_{0:T}^a)$  to the defender's outputs $y_{0:T}$ is 
\begin{equation*}
IF_{T}=\frac{1}{T+1} D_{KL}(\mathbb{D}_{y_{0:T}}^{\mathcal{M}, \mathcal{U}_{0:T-1}, \mathcal{U}_{0:T-1}^a, \mathcal{D}_{0:T}^a} ||\mathbb{D}_{y_{0:T}}^{\mathcal{M}, \mathcal{U}_{0:T-1},0,0}).
\end{equation*}
\end{definition}

The proposed definition of information flows has many desirable properties, which make it compatible with existing measures of information flow in cyber security. First, the KL divergence allows us to recover the property of noninterference \cite{GoguenM82} in deterministic systems and probabilististic noninterference \cite{VolpanoS99} in stochastic systems.  There exists interference from a high level user to a low level user if changing high level inputs changes low level outputs. 

In our model, the low level inputs are the defender's actions, the high level inputs are the attacker's actions, and the low level outputs are the defender's outputs $y_{0:k}$.  In a deterministic system, if an adversary's actions change the output $y_{0:k}$, the KL divergence is infinite, reflecting the fact that there is interference. However, if the output $y_{0:k}$ is the same when the system is operating normally and under attack, indicating noninterference, the KL divergence is 0. There exists probabilistic interference from a high level user  to a low level user if changing high level inputs measurably alters the distribution of low level outputs. $IF_T = 0$ if and only if there exists probabilistic noninterference. 

Finally, when there exists probabilistic interference, we would like to have a means to measure information flow. In software security, this is done through research in quantitative information flow. A majority of previous work in software security \cite{Smith09} have proposed associative measures of information flow such as mutual information. Associative measures of information flow, which quantify correlation, attempt to evaluate how much information is leaked by an input to the output and thus provide utility in privacy applications. 

The KL divergence however is a causal measure which directly determines how varying an attacker's inputs changes the distribution of public outputs. The extent to which an attacker's input changes the system output will mark the defender's ability to distinguish outputs under attack from outputs under normal operation and thus detect the presence of an adversary. While recent work in software security has begun to investigate causal measures of information flow for violation detection, to our knowledge, the ensuing results will be the first work applied to physical systems.

To close the section we attempt to categorize adversarial policies which generate information flows bounded above by $\epsilon$ when the defender implements control policies in a set $\mathbf{U}$ or has a system with model in $\mathbf{M}$.
\begin{definition}
A permissible attack  $(\mathcal{U}_{0:T-1}^a, \mathcal{D}_{0:T}^a)$ generates a $(\mathbf{M},\mathbf{U})$ conditional $\epsilon$- weak information flow if for all $ \mathcal{U}_{0:T-1} \in \mathbf{U}$ and for all $ \mathcal{M} \in \mathbf{M}$, $IF_T \le \epsilon$. \label{modelrobust}
\end{definition}
Several special cases which satisfy this definition have arisen in the literature. For instance, a replay attack, generates an information flow bounded above by $\epsilon$ only for certain classes of models $\mathbf{M}$ and strategies $\mathbf{U}$. Another special case is below.
\begin{definition}
An adversary generates a $\mathcal{M}$ conditional $\epsilon$- weak information flow if for a specific model $\mathcal{M}$, $IF_T \le \epsilon$, regardless of the defender's policy $\mathcal{U}_{T-1}$. 
\end{definition}
This special case, where we remove any constraints on the defender's policy, is equivalent to $\epsilon$-stealthiness in \cite{Bai2015} and contains false data injections and zero dynamic attacks which we consider in section \ref{sec:stealthy}. We now consider defender policies and system design which elicit information flows. 
\begin{definition}
A change in the system $\mathcal{M}$ or a permissible control policy $\mathcal{U}_{0:T-1}$ generates a $\mathbf{U}^a$  conditional $\epsilon$- strong information flow if for $(\mathcal{U}_{0:T-1}^a, \mathcal{D}_{0:T}^a) \in \mathbf{U}^a$, $IF_T \ge \epsilon$.
\end{definition}
The preceding definition characterizes active detection where an adversary changes system parameters or his control policy to create an information flow. We will examine this topic further in section \ref{sec:active}.

\section{Passive Detection} \label{sec:passive}
In this section we motivate the KL divergence as a tool to quantify the passive detectability of an adversary and evaluate the special case of $\mathcal{M}$ conditional $\epsilon$- weak information flows. Specifically, we show that this measure is directly related to the optimal decay rate for the probability of false alarm.  We now have the following result from \cite{Bai2015}.

\begin{theorem} 
 Let $0 < \delta < 1$. Define $\alpha_k$ the probability of false alarm and $\beta_k$ the probability of detection as follows
\begin{align*}
\alpha_k & \triangleq \mbox{Pr}\left(\Psi_k(\mathcal{I}_k) = \mathcal{H}_0 | \mathcal{H}_0 \right),~ \beta_k \triangleq \mbox{Pr}\left(\Psi_k(\mathcal{I}_k) = \mathcal{H}_1 | \mathcal{H}_1 \right).
\end{align*}
Suppose  $\underset{k \rightarrow \infty}{\limsup}~ IF_k \ge \epsilon$. Then there exists a detector $\Psi_k$ such that \\  $\beta_k \ge 1-\delta,~\forall k,~~ \underset{k \rightarrow \infty}{\limsup} -\frac{1}{k+1} \log(\alpha_k) \ge \epsilon$. \\
Alternatively, suppose additionally that the sequences generated by $y_{0:k}$ operating normally and under attack are ergodic. Suppose $\underset{k \rightarrow \infty}{\lim}~ IF_k \le \epsilon.$
Then for all detectors $\Psi_k$
\begin{equation*}
\beta_k \ge 1-\delta,~\forall k \implies \underset{k \rightarrow \infty}{\limsup} -\frac{1}{k+1} \log(\alpha_k) \le \epsilon.
\end{equation*}
\label{stein}
\end{theorem}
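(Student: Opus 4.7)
The plan is to establish this as an instance of the classical Stein's lemma from asymptotic hypothesis testing, adapted to the (generally non-i.i.d.) output process of the controlled system under the two hypotheses. Let $p_1$ denote the joint density of $y_{0:k}$ under attack and $p_0$ the nominal density, so that $(k+1)\, IF_k = \mathbb{E}_{P_1}[\log(p_1/p_0)]$, and set $L_k = \log(p_1(y_{0:k})/p_0(y_{0:k}))$. Each half of the theorem amounts to quantifying how tightly $L_k/(k+1)$ concentrates around $IF_k$.

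\textbf{Direct part.} I would use a log-likelihood-ratio test: declare $\mathcal{H}_1$ on $A_k = \{L_k/(k+1) > \epsilon - \eta\}$ for small $\eta > 0$. Under $\mathcal{H}_1$, $\mathbb{E}_{P_1}[L_k/(k+1)] = IF_k$, so by $\limsup_k IF_k \ge \epsilon$ and a standard concentration argument, $P_1(A_k) \to 1$ along the subsequence attaining the $\limsup$; at the remaining indices a trivial detector ($\Psi_k \equiv \mathcal{H}_1$) preserves $\beta_k \ge 1-\delta$ without affecting the $\limsup$ of the false-alarm exponent. A change-of-measure identity then gives
\[
\alpha_k \;=\; P_0(A_k) \;=\; \mathbb{E}_{P_1}\bigl[e^{-L_k}\mathbf{1}_{A_k}\bigr] \;\le\; e^{-(k+1)(\epsilon - \eta)}
\]
along the same subsequence, hence $\limsup_k -\tfrac{1}{k+1}\log \alpha_k \ge \epsilon - \eta$; letting $\eta \to 0$ closes this direction.

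\textbf{Converse part.} Here I would invoke the ergodicity hypothesis via a Shannon--McMillan--Breiman-type argument to conclude that $L_k/(k+1)$ converges in $P_1$-probability to $\lim_k IF_k \le \epsilon$. Define the typical set $T_k = \{L_k/(k+1) \le \epsilon + \eta\}$, so that $P_1(T_k) \to 1$. For any detector with $\mathcal{H}_1$-acceptance region $B_k$ satisfying $\beta_k = P_1(B_k) \ge 1-\delta$, we have $P_1(B_k \cap T_k) \ge 1 - \delta - o(1)$, and the change of measure yields
\[
\alpha_k \;=\; P_0(B_k) \;\ge\; \mathbb{E}_{P_1}\bigl[e^{-L_k}\mathbf{1}_{B_k \cap T_k}\bigr] \;\ge\; e^{-(k+1)(\epsilon + \eta)}\, P_1(B_k \cap T_k).
\]
Taking logarithms and passing to the limit gives $\limsup_k -\tfrac{1}{k+1}\log \alpha_k \le \epsilon + \eta$, and then $\eta \to 0$.

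The main obstacle is precisely the concentration of $L_k/(k+1)$ around its mean $IF_k$. In the i.i.d.\ case this is the law of large numbers, but in the general control-theoretic setup it requires either sharp tail bounds on the likelihood-ratio increments (enough for the direct part, which only needs one-sided control along a subsequence) or full ergodicity (used in the converse, where the two-sided typical-set argument is needed to avoid losing a $\delta$-dependent factor in the exponent that the cruder data-processing approach would incur). Verifying that the admissible attacker and defender policies do not destroy the underlying ergodicity of the joint output process is the step most sensitive to the control-system structure, and is the source of the extra hypothesis appearing in the converse half.
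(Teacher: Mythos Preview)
The paper does not prove this theorem at all: it is quoted verbatim as a result from \cite{Bai2015}, so there is no in-paper argument to compare against. Your sketch is the standard Stein/Chernoff--Stein route (likelihood-ratio thresholding plus a change-of-measure bound for achievability, and a typical-set change-of-measure for the converse), and that is indeed how such statements are proved in the hypothesis-testing literature.

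One genuine gap to be aware of in your direct part: you need $P_1\bigl(L_k/(k+1) > \epsilon - \eta\bigr) \to 1$ along the good subsequence, and you justify this by ``a standard concentration argument.'' But the theorem, as stated, imposes \emph{no} ergodicity or mixing hypothesis for the achievability half, and in the general information-spectrum setting (Han--Verd\'u) the achievable error exponent is governed by the $P_1$-$\liminf$ in probability of $L_k/(k+1)$, which can be strictly smaller than $IF_k = \mathbb{E}_{P_1}[L_k]/(k+1)$ when concentration fails. So ``standard concentration'' is not free here; either one must import additional regularity from the control-system structure (e.g., the Gaussian residue process and the martingale-difference decomposition of $L_k$ that \cite{Bai2015} exploits), or the statement should be read with that implicit structure in mind. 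Your converse argument is fine precisely because the extra ergodicity hypothesis is explicitly assumed there, and you correctly flag that asymmetry at the end of your write-up.
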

Based on Theorem \ref{stein}, the information flow is essentially equivalent to the optimal decay rate in the probability of false alarm and an adversary who generates an $\mathcal{M}$ conditional $\epsilon$-weak information flow will have false alarm rate bounded above by $\epsilon$. As a result, information flow allows us to generically evaluate and compare the detectability of different attack policies. However unlike other potential measures such as $\beta_k$, the KL divergence can be efficiently characterized. 

We note that it may be difficult to compute the KL divergence of the outputs $y_{0:T-1}$ directly. For instance, if a control policy includes nonlinear feedback, the Gaussian property of the output is destroyed, which likely removes the ability to obtain closed form distributions of the output. We can instead consider the normalized residue $z_k$, obtained from a Kalman filter \cite{Kalman1960}.
\begin{equation}
\hat{x}_{k+1|k} = A\hat{x}_{k|k} + B u_k, ~\hat{x}_{k|k} = (I - K_kC)\hat{x}_{k|k-1} + K_ky_k, \label{eq:stateestimate}
\end{equation}
\begin{equation*}
P_{k+1|k} = AP_{k|k-1}A^T + Q - AK_kCP_{k|k-1}A^T,
\end{equation*}
\begin{equation*}
K_k = P_{k|k-1}C^T(CP_{k|k-1}C^T+R)^{-1},
\end{equation*}
\begin{equation}
z_k = (CP_{k|k-1}C^T+R)^{-\frac{1}{2}}(y_k - C\hat{x}_{k|k-1}).  \label{eq:residue}
\end{equation}
The Kalman filter computes optimal state estimates $\hat{x}_{k|k-1}$ and $\hat{x}_k$ of $x_k$. The normalized residue $z_k$ is a normalized measure of the difference between the defender's outputs and the expected outputs derived from the state estimate. We now have the following result \cite{mehra1971innovations}.
\begin{lemma}
The set of residues $f(z_{0:k}|\mathcal{I}_{-1}) = \mathcal{N}(0,I)$ when the system is operating normally. Given fixed strategy $\mathcal{U}_{0:k-1}$ and $\hat{x}_{0|-1}$, $z_{0:k}$ is an invertible function of $y_{0:k}$.  \label{lem:residue} 
\end{lemma}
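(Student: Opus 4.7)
The plan is to address the two claims separately: whiteness/normality of the residue process under nominal operation, and invertibility of the map $y_{0:k}\mapsto z_{0:k}$ for a fixed policy.

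For the first claim, I would work with the innovation $e_k \triangleq y_k - C\hat x_{k|k-1}$ under the noiseless-attack dynamics ($u_k^a=0,d_k^a=0$). Writing $e_k = C(x_k-\hat x_{k|k-1}) + v_k$, the standard Kalman filter recursion, together with the Gaussian prior $x_0\sim\mathcal{N}(\hat x_{0|-1},P_{0|-1})$ and Gaussianity of $w_k,v_k$, gives that the estimation error $x_k-\hat x_{k|k-1}$ is zero-mean Gaussian with covariance $P_{k|k-1}$ and is independent of $v_k$. Hence $e_k\sim\mathcal{N}(0,CP_{k|k-1}C^T+R)$ and $z_k = (CP_{k|k-1}C^T+R)^{-1/2}e_k\sim\mathcal{N}(0,I)$. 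Whiteness follows from the classical innovations property: $\hat x_{k|k-1}$ is $\mathcal{I}_{k-1}$-measurable and $E[e_k\mid \mathcal{I}_{k-1}]=0$, so for $j<k$, $E[e_k e_j^T]=0$; joint Gaussianity then upgrades orthogonality to independence, yielding $f(z_{0:k}\mid \mathcal{I}_{-1})=\mathcal{N}(0,I)$ on the product space.

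For the second claim, I would argue by induction on $k$ that the map $y_{0:k}\mapsto z_{0:k}$ is a bijection once $\mathcal{U}_{0:k-1}$ and $\hat x_{0|-1}$ are fixed. The key observation is that the covariance recursion for $P_{k|k-1}$ is data-independent, so all normalization factors $(CP_{k|k-1}C^T+R)^{-1/2}$ and gains $K_k$ are pre-determined. At $k=0$, $z_0$ is an invertible affine function of $y_0$ via \eqref{eq:residue}. Assuming the map is invertible through time $k-1$, the sequence $\hat x_{0|0},\hat x_{1|0},\ldots,\hat x_{k|k-1}$ is computable from $y_{0:k-1}$ using \eqref{eq:stateestimate} and the feedback law $u_j=\mathcal{U}_j(\mathcal{I}_j)$, so $z_k$ is again an invertible affine function of $y_k$ given the past; inversely, $y_k = C\hat x_{k|k-1} + (CP_{k|k-1}C^T+R)^{1/2}z_k$ recovers $y_k$ from $z_{0:k}$ once the past has been reconstructed. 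Concatenating, the Jacobian of $y_{0:k}\mapsto z_{0:k}$ is block lower triangular with nonsingular diagonal blocks $(CP_{k|k-1}C^T+R)^{-1/2}$, hence the map is a bijection.

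The main obstacle is bookkeeping rather than a deep step: one must be careful that although $u_k$ depends on $y_{0:k}$ through the feedback law (destroying any naive linearity), the triangular structure of the residue generator still yields invertibility because each $z_k$ depends on $y_k$ only through an invertible affine term plus a function of $y_{0:k-1}$ computable from the fixed strategy and initial estimate. The whiteness argument itself is standard; I would either sketch it via the orthogonality computation above or cite it as the classical innovations result for the Kalman filter.
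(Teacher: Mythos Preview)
Your proposal is correct. The paper does not actually prove this lemma; it simply states it and cites the classical innovations result \cite{mehra1971innovations}, so your sketch---the standard orthogonality/whiteness argument for the Kalman innovations together with the block-triangular invertibility of the residue generator---is exactly the argument the cited reference supplies, and your closing remark that one could alternatively cite the classical result is precisely what the paper does.
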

Because the residues and outputs are related by an invertible mapping, we can show their KL divergences are equal \cite{kullback1968information}.
\begin{theorem} \label{residue_output}
The KL divergence between sensor outputs and between residues are equivalent.
\begin{align*}
& D_{KL}(\mathbb{D}_{y_{0:T}}^{\mathcal{M}, \mathcal{U}_{0:T-1}, \mathcal{U}_{0:T-1}^a, \mathcal{D}_{0:T}^a} ||\mathbb{D}_{y_{0:T}}^{\mathcal{M}, \mathcal{U}_{0:T-1},0,0}) \\ &= D_{KL}(\mathbb{D}_{z_{0:T}}^{\mathcal{M}, \mathcal{U}_{0:T-1}, \mathcal{U}_{0:T-1}^a, \mathcal{D}_{0:T}^a} ||\mathbb{D}_{z_{0:T}}^{\mathcal{M}, \mathcal{U}_{0:T-1},0,0})
\end{align*}
\label{residueKL}
\end{theorem}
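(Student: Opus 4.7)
The plan is to reduce the theorem to the standard fact that the Kullback-Leibler divergence is invariant under invertible, deterministic measurable transformations, using Lemma \ref{lem:residue} to produce the transformation in question.

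First I would fix a model $\mathcal{M}$, a defender policy $\mathcal{U}_{0:T-1}$, and the prior mean $\hat{x}_{0|-1}$, and unroll the Kalman recursion (\ref{eq:stateestimate}) together with the normalization (\ref{eq:residue}). Doing so exhibits $z_{0:T}$ as $z_{0:T} = \Phi(y_{0:T})$ for a block lower triangular, affine (hence bijective and measurable) map $\Phi$. The crucial observation, already packaged in Lemma \ref{lem:residue}, is that $\Phi$ is determined entirely by $\mathcal{M}$, $\hat{x}_{0|-1}$, and the control policy $\mathcal{U}_{0:T-1}$; the attack inputs $u_{0:T-1}^{a}$, $d_{0:T}^{a}$ do not enter the recursion that produces $\Phi$. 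Consequently the same map $\Phi$ is applied under both hypotheses: pushing forward the attacked law $\mathbb{D}_{y_{0:T}}^{\mathcal{M}, \mathcal{U}_{0:T-1}, \mathcal{U}_{0:T-1}^a, \mathcal{D}_{0:T}^a}$ by $\Phi$ yields $\mathbb{D}_{z_{0:T}}^{\mathcal{M}, \mathcal{U}_{0:T-1}, \mathcal{U}_{0:T-1}^a, \mathcal{D}_{0:T}^a}$, and pushing forward the nominal law $\mathbb{D}_{y_{0:T}}^{\mathcal{M}, \mathcal{U}_{0:T-1},0,0}$ by the \emph{same} $\Phi$ yields $\mathbb{D}_{z_{0:T}}^{\mathcal{M}, \mathcal{U}_{0:T-1},0,0}$.

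Second, I would invoke the change-of-variables identity for KL divergence. Writing $p$ and $q$ for the densities of the two output laws on $\mathbb{R}^{m(T+1)}$, the pushforward densities under $\Phi$ are $p(\Phi^{-1}(z))|\det D\Phi^{-1}(z)|$ and $q(\Phi^{-1}(z))|\det D\Phi^{-1}(z)|$. The Jacobian factor appears identically in the numerator and denominator of the likelihood ratio and so cancels inside the logarithm, giving
\begin{equation*}
\int \log\frac{p(y)}{q(y)}\, p(y)\,dy = \int \log\frac{p(\Phi^{-1}(z))}{q(\Phi^{-1}(z))}\, p(\Phi^{-1}(z))\,|\det D\Phi^{-1}(z)|\,dz,
\end{equation*}
where the left-hand side is $D_{KL}$ between the output laws and the right-hand side is $D_{KL}$ between the residue laws. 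This is precisely the claimed equality, and for a fully measure-theoretic justification one can quote the corresponding invariance result in \cite{kullback1968information}.

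The only step I expect to require care is the observation that $\Phi$ is the same map under both hypotheses, since the control inputs $u_k = \mathcal{U}_k(\mathcal{I}_k)$ themselves depend on the realized $y_{0:k}$; one must verify that, trajectory by trajectory, the recursion driving $\Phi$ is a function of $y_{0:T}$ through a common rule that does not reference $u_k^a$ or $d_k^a$. Given a fixed admissible defender policy this is immediate from the definition of $\mathcal{U}_k$ and from (\ref{eq:stateestimate})--(\ref{eq:residue}), which never see the attacker's inputs directly. Once that point is in hand, the invariance of KL divergence under invertible deterministic maps closes the proof.
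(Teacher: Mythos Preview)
Your proposal is correct and follows exactly the paper's approach: the paper does not give a detailed proof but simply remarks, just before the theorem, that ``the residues and outputs are related by an invertible mapping'' (Lemma~\ref{lem:residue}) and cites \cite{kullback1968information} for the invariance of $D_{KL}$ under such maps. Your write-up merely spells out that argument in full, including the point that the map $\Phi$ depends only on $(\mathcal{M},\hat{x}_{0|-1},\mathcal{U}_{0:T-1})$ and hence is identical under both hypotheses.
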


Due to theorem \ref{residue_output}, we can analyze the residues operating normally and under attack instead of the system output when computing the information flow. Residues under normal operation have a known zero-mean Gaussian distribution. If the distribution of the residue under attack remains Gaussian, a closed form solution exists for the KL divergence. The KL divergence between two Gaussian distributions $\mathcal{N}_1 = \mathcal{N}_1(\mu_1, \Sigma_1)$ and $\mathcal{N}_0 = \mathcal{N}_0(\mu_0, \Sigma_0)$ with $\mu_1 \in \mathbb{R}^l$ is \cite{Cover:2006:EIT:1146355}
\begin{align}
 D_{KL}(\mathcal{N}_1 || \mathcal{N}_0) &= -\frac{l}{2} + \frac{1}{2} \mbox{tr}(\Sigma_0^{-1}\Sigma_1) + \frac{1}{2}\log\det \left( \Sigma_0 \Sigma_1^{-1} \right) \nonumber \\ &+ \frac{1}{2}  (\mu_1 - \mu_0)^T\Sigma_0^{-1}(\mu_1 - \mu_0). \label{KLnormal} 
\end{align}

If the attacker's policy is independent of the defender's outputs, it is known that the distribution of residues under attack remain Gaussian. In general however, it may still be difficult to compute the KL divergence of $z_{0:k}$ since it is a growing sequence. Fortunately, we can leverage the independence of the residues to obtain the following bound.

\begin{theorem}
The information flow generated by an adversary can be lower bounded by the sum of the residue-based KL divergences generated at each time step.
\begin{align*}
 IF_T \ge  \sum_{k=0}^T  \frac{D_{KL}(\mathbb{D}_{z_{k}}^{\mathcal{M}, \mathcal{U}_{0:k-1}, \mathcal{U}_{0:k-1}^a, \mathcal{D}_{0:k}^a} ||\mathbb{D}_{z_{k}}^{\mathcal{M}, \mathcal{U}_{0:k-1},0,0})}{T+1}.
 \end{align*}
 \label{sumofKLS}
\end{theorem}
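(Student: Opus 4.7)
The plan is to reduce the statement to a chain-rule argument on the residues, exploiting the fact that under normal operation the residue sequence is i.i.d.\ standard Gaussian by Lemma \ref{lem:residue}. Denote the under-attack joint density of residues by $p(z_{0:T})$ and the nominal one by $q(z_{0:T})$. By Theorem \ref{residue_output}, $(T+1)\,IF_T = D_{KL}(p(z_{0:T}) \| q(z_{0:T}))$, so it suffices to show
\begin{equation*}
D_{KL}(p(z_{0:T}) \| q(z_{0:T})) \;\ge\; \sum_{k=0}^{T} D_{KL}(p(z_k) \| q(z_k)).
\end{equation*}

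The first step is to invoke the standard chain rule for the KL divergence,
\begin{equation*}
D_{KL}(p(z_{0:T}) \| q(z_{0:T})) = \sum_{k=0}^{T} \mathbb{E}_{p(z_{0:k-1})}\bigl[ D_{KL}(p(z_k \mid z_{0:k-1}) \| q(z_k \mid z_{0:k-1})) \bigr].
\end{equation*}
Next, I would use Lemma \ref{lem:residue} to eliminate conditioning on the nominal side: because $q$ factorizes as $\prod_k q(z_k)$ with each $q(z_k) = \mathcal{N}(0,I)$, we have $q(z_k \mid z_{0:k-1}) = q(z_k)$, so each summand becomes $D_{KL}(p(z_k \mid z_{0:k-1}) \| q(z_k))$.

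The third step is the key inequality. For each $k$, I would add and subtract $\log p(z_k)$ inside the expectation defining the conditional KL and split it as
\begin{equation*}
D_{KL}(p(z_k \mid z_{0:k-1}) \| q(z_k)) = I_p(z_k;\, z_{0:k-1}) + D_{KL}(p(z_k) \| q(z_k)),
\end{equation*}
where $I_p$ denotes mutual information under the attacked distribution. Since mutual information is non-negative, each conditional term is bounded below by the marginal term $D_{KL}(p(z_k) \| q(z_k))$. Summing over $k$ and dividing by $T+1$ yields the claimed bound.

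The only nontrivial point to watch is that the argument does not require the attacked residues to be Gaussian or independent; the asymmetry in the roles of $p$ and $q$ is exactly what makes the chain-rule decomposition clean, since only the reference $q$ needs to factorize, and Lemma \ref{lem:residue} guarantees this. The main obstacle, if any, is purely notational rather than mathematical: one must be careful that the conditioning in $\mathbb{D}_{z_k}^{\mathcal{M}, \mathcal{U}_{0:k-1}, \mathcal{U}_{0:k-1}^a, \mathcal{D}_{0:k}^a}$ refers to the marginal of $z_k$ at time $k$ under the full attacker policy, so that the marginal KL on the right-hand side matches the term produced by the chain-rule decomposition.
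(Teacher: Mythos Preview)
Your proposal is correct and follows essentially the same route as the paper: invoke Theorem~\ref{residue_output} to pass to residues, use the chain rule together with the independence of the nominal residues (Lemma~\ref{lem:residue}) to write each summand as $D_{KL}(p(z_k\mid z_{0:k-1})\|q(z_k))$, and then split this as mutual information plus the marginal KL to obtain the bound. The paper's proof is terser but identical in substance; one small notational quibble is that your displayed identity $D_{KL}(p(z_k\mid z_{0:k-1})\|q(z_k)) = I_p(z_k;z_{0:k-1}) + D_{KL}(p(z_k)\|q(z_k))$ should be read with the outer expectation $\mathbb{E}_{p(z_{0:k-1})}[\cdot]$ still in place on the left-hand side, as you had it in the preceding line.
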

\begin{proof}

By Theorem \ref{residue_output} and  Bayes rule we know
\begin{align*}
IF_T=  \sum_{k=0}^T  \frac{D_{KL}(\mathbb{D}_{z_{k}|z_{0:k-1}}^{\mathcal{M}, \mathcal{U}_{0:k-1}, \mathcal{U}_{0:k-1}^a, \mathcal{D}_{0:k}^a} ||\mathbb{D}_{z_{k}}^{\mathcal{M}, \mathcal{U}_{0:k-1},0,0})}{T+1}.
\end{align*}
Thus, we observe 
\begin{align*}
IF_T  - IF_T^{LB} =  \sum_{k=0}^T \frac{I_{z_k,z_{0:k-1}}^{\mathcal{M}, \mathcal{U}_{0:k-1}, \mathcal{U}_{0:k-1}^a, \mathcal{D}_{0:k}^a}}{k+1}.
\end{align*}
where $IF^{LB}_T$ is the obtained lower bound and $ I_{z_k,z_{0:k-1}}$ is the mutual information  \cite{Cover:2006:EIT:1146355} which is nonnegative.
\end{proof}

Instead of computing the KL divergence of vectors $z_{0:k} \in R^{m{k}}$, which in general requires us to store and compute the determinant of a matrix in $\mathbb{R}^{mk \times mk}$, we can instead obtain a recursive lower bound by computing the sum of $T$ divergences for vectors $z_k \in \mathbb{R}^m$. Moreover, note that the gap between the lower bound and $IF_T$ is the scaled sum of mutual informations between $z_k$ and $z_{0:k-1}$ so that if attack residues are independent, the gap is 0.

\section{Stealthy Adversarial Behavior} \label{sec:stealthy}
We next describe attacks which generate $\mathcal{M}$ conditional $\epsilon$-weak information flows, where regardless of the defender's policy the attacker remains stealthy. Understanding these scenarios motivate resilient design of $\mathcal{M}$ and also allow us to capture and extend research on left invertibility and false data injection attacks. The first scenario we consider is when $\epsilon = 0$ where there exists probabilistic noninterference. 

Let $y_{0:T}^a$ denote outputs realized from the distribution under attack $\mathbb{D}_{y_{0:T}}^{\mathcal{M}, \mathcal{U}_{0:T-1}, \mathcal{U}_{0:T-1}^a, \mathcal{D}_{0:T}^a}$ and $y_{0:T}$ denote outputs realized from the normal system $\mathbb{D}_{y_{0:T}}^{\mathcal{M}, \mathcal{U}_{0:T-1},0,0}$. If $\mathcal{U}_{0:T-1} = 0$, then, due to the linearity of our model $\mathcal{M}$,
\begin{align}
y_{0:T}^a &= y_{0:T} + \Delta y_{0:T} (d_{0:T}^a,u_{0:T-1}^a), \label{eq:dely} \\
\Delta x_{k+1} &= A \Delta x_k + B^a u_k^a, ~ \Delta x_0 = 0, \label{eq:delx} \\
 \Delta y_k &= C \Delta x_k+ D^a d_k^a. \label{eq:delyy}
\end{align}

We now obtain the following result.
\begin{theorem}
A nonzero attack strategy $(\mathcal{U}_{0:T-1}^a, \mathcal{D}_{0:T}^a)$ generates a $\mathcal{M}$ conditional 0-weak information flow if and only if $\Delta y_{0:T} (d_{0:T}^a,u_{0:T-1}^a) = 0$ with probability 1. \label{0attack}
\end{theorem}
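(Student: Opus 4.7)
The plan is to prove both directions by coupling the attacked and nominal trajectories on a common realization of the noise $(x_0, w_{0:T-1}, v_{0:T})$ and using linearity to track the deviation, with the $\Delta y_{0:T}$ of (\ref{eq:dely})--(\ref{eq:delyy}) as the driving quantity.

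For the sufficient direction, I fix an arbitrary admissible defender policy $\mathcal{U}_{0:T-1}$ and show by induction on $k$ that the coupled attacked and nominal outputs agree pathwise. The base case uses $\Delta x_0 = 0$ so that $y_0^a - y_0 = D^a d_0^a = \Delta y_0 = 0$ almost surely, whereupon $\mathcal{U}_0$ produces the same control in both worlds. For the inductive step, equal past outputs and equal past controls mean that the $B u_k$ term cancels in the pathwise difference, reducing the mismatch dynamics exactly to (\ref{eq:delx})--(\ref{eq:delyy}); the hypothesis $\Delta y_k = 0$ a.s.\ then closes the induction. Pathwise equality gives equality of distributions, hence $IF_T = 0$ for every admissible $\mathcal{U}_{0:T-1}$, which is the definition of a $\mathcal{M}$-conditional $0$-weak information flow.

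For the necessary direction, I specialize the defender's policy to $\mathcal{U}_{0:T-1} \equiv 0$, which is admissible. By property (2) of the KL divergence and the hypothesis $IF_T = 0$, the output distributions under attack and under no attack coincide almost everywhere. With zero control the decomposition (\ref{eq:dely}) is exact, so $y_{0:T}^a = y_{0:T} + \Delta y_{0:T}$ and equality of distributions is an additive-perturbation condition on the Gaussian process $y_{0:T}$. I would push this through the residue transform of Lemma~\ref{lem:residue} and Theorem~\ref{residue_output}: the nominal residue $z_{0:T}$ is i.i.d.\ $\mathcal{N}(0, I)$, while the attacked residue acquires a one-step conditional mean shift that is an invertible linear functional of $\Delta y_{0:k}$. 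Equality of residue distributions step by step therefore forces each shift to vanish almost surely, which back-substitutes to $\Delta y_{0:T} = 0$ a.s.

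The main obstacle is the necessity direction, because $\Delta y_{0:T}$ can itself be random whenever the adversary's policy feeds back on $y^{pu}$; a priori a carefully chosen feedback could mimic a Gaussian law and preserve the marginal of $y_{0:T}$. The residue reduction is what resolves this cleanly: conditional on the past, the nominal innovation is standard Gaussian and independent of everything, so any nontrivial drift produced by the attack is detectable in distribution at that step, and induction in $k$ then propagates $\Delta y_k = 0$ back through (\ref{eq:delx})--(\ref{eq:delyy}).
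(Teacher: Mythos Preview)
Your sufficiency direction is exactly the paper's: the paper writes that (\ref{eq:dely}) continues to hold under any policy ``based on the fact that the defender's control strategy will not change if the output does not change,'' which is your inductive coupling made explicit.

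For necessity the paper is far terser than you. It specializes to $\mathcal{U}_{0:T-1}\equiv 0$, observes $y_{0:T}^a=y_{0:T}+\Delta y_{0:T}$ on the common noise realization, and simply asserts that if $\Delta y_{0:T}\neq 0$ with positive probability then the KL divergence is positive. There is no passage through residues or conditional laws; the paper is effectively treating $\Delta y_{0:T}$ as a deterministic bias on a Gaussian. Your residue-and-conditioning approach is more careful and, as you correctly flag, is what is actually needed once the adversary is allowed output feedback, since pathwise inequality of coupled processes does not by itself force distributional inequality.

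That said, your fix still has a gap at precisely the point you highlight. You describe the attacked residue as acquiring ``a one-step conditional mean shift,'' which amounts to treating $\Delta z_k$ as $\sigma(z_{0:k-1}^a)$-measurable. But the attack model allows $d_k^a=\mathcal{D}_k^a(\mathcal{I}_{k-1}^a,y_k^{pu})$ to depend on the \emph{current} public output, so $\Delta z_k$ can depend on components of the current innovation as well; conditional on the past alone, $z_k^a$ is then not a mean-shifted Gaussian, and a nonzero perturbation can still preserve the $\mathcal{N}(0,I)$ law (think of $z_k\mapsto -z_k$). To close the induction you must also condition on $y_k^{pu}$ and use the structure of $D^a$---it acts only on the coordinates in $\mathcal{S}$, whereas well-posedness of the attack forces $y_k^{pu}$ to live on complementary coordinates---so that, given the past and the unaffected current components, the remaining part of $\Delta z_k$ really is a deterministic shift and your mean-shift reasoning applies.
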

\begin{proof}
Suppose $ \Delta y_{0:T} (d_{0:T}^a,u_{0:T-1}^a) = 0$ with probability $1 - \epsilon$ where $\epsilon > 0$. Then for $\mathcal{U}_{0:T-1} = 0$, we have with probability $1- \epsilon$, $y_{0:T}^a \neq y_{0:T}$. Thus, the KL divergence is greater than 0. Now instead suppose $ \Delta y_{0:T} (d_{0:T}^a,u_{0:T-1}^a) = 0$ with probability $1$. From \eqref{eq:attackdyn} and \eqref{eq:outputdyn}, we observe that \eqref{eq:dely} holds if $ \Delta y_{0:T} (d_{0:T}^a,u_{0:T-1}^a) = 0$. This is based on the fact that the defender's control strategy will not change if the output does not change. Thus, if $\Delta y_{0:T} (d_{0:T}^a,u_{0:T-1}^a) = 0$ with probability $1$, then $y_{0:T}^a = y_{0:T}$ with probability 1. Therefore, the KL divergence and information flow is 0.
\end{proof}
We have shown that there exists a $0$-information flow attack if and only if there exists nontrivial $(\mathcal{U}_{0:T-1}^a, \mathcal{D}_{0:T}^a)$  which satisfy \eqref{eq:delx}, \eqref{eq:delyy} for $0 \le k \le T$. For long enough time horizon this is in fact equivalent to left invertibility.
\begin{theorem}
Let $\hat{B}^a = \begin{bmatrix} B^a & 0_{n \times m^\prime} \end{bmatrix},~\hat{D}^a = \begin{bmatrix} 0_{m \times p^\prime} & D^a \end{bmatrix}$. Suppose $T \ge n - p^{\prime} + 1$. A nonzero adversarial policy $(\mathcal{U}_{0:T-1}^a, \mathcal{D}_{0:T}^a)$ can generate a $\mathcal{M}$ conditional $0$-weak information flow if and only if $(A,\hat{B}^a,C,\hat{D}^a)$ is not left invertible.
\end{theorem}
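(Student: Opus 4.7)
The plan is to first reduce the claim to an algebraic output-nulling condition using the preceding theorem, then recognize that condition as the standard characterization of left invertibility, with the horizon bound $T \geq n - p' + 1$ ensuring that the test is finite-time. By Theorem \ref{0attack}, a nonzero attack $(\mathcal{U}^a_{0:T-1}, \mathcal{D}^a_{0:T})$ induces a $\mathcal{M}$ conditional $0$-weak information flow if and only if $\Delta y_{0:T}(d^a_{0:T}, u^a_{0:T-1}) = 0$ with probability one. Since \eqref{eq:delx}--\eqref{eq:delyy} are deterministic and $\Delta x_0 = 0$, this is equivalent to the existence of a nonzero $(u^a_{0:T-1}, d^a_{0:T})$ satisfying $\Delta x_{k+1} = A \Delta x_k + B^a u^a_k$ and $C \Delta x_k + D^a d^a_k = 0$ for every $k \in \{0, \ldots, T\}$.

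The next step is to rewrite these constraints in the compact form of the combined system $(A, \hat{B}^a, C, \hat{D}^a)$. With the stacked input $w_k \triangleq [u_k^{a\top}, d_k^{a\top}]^\top$, the definitions of $\hat{B}^a$ and $\hat{D}^a$ give $\hat{B}^a w_k = B^a u_k^a$ and $\hat{D}^a w_k = D^a d_k^a$, so the constraints become $\Delta x_{k+1} = A\Delta x_k + \hat{B}^a w_k$ and $C \Delta x_k + \hat{D}^a w_k = 0$. This is precisely the zero-output problem for $(A, \hat{B}^a, C, \hat{D}^a)$ starting from zero initial state. By definition, this system fails to be left invertible exactly when such a nontrivial $w_{0:T}$ exists (for some horizon), so the equivalence in the theorem reduces to showing that the horizon $T \geq n - p' + 1$ is long enough to witness this.

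To handle the horizon, I would unroll the dynamics and observe that the constraint is equivalent to $\mathcal{T}_T w_{0:T} = 0$, where $\mathcal{T}_T$ is the block lower-triangular Toeplitz matrix with $\hat{D}^a$ on the diagonal and Markov parameters $CA^{j-1}\hat{B}^a$ on the $j$-th subdiagonal. The forward implication (left invertibility $\Rightarrow$ trivial kernel of $\mathcal{T}_T$ for every $T$) is immediate. For the converse, one uses a Cayley--Hamilton / output-nulling subspace argument: the sequence of kernels stabilizes because the Markov parameters span a space of dimension at most $n$, and the per-step increase in effective ``new'' columns is governed by the $p'$ columns of $\hat{B}^a$ that act only through the dynamics (the $m'$ direct-feedthrough columns are already captured by $\hat{D}^a$ in the current block row). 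A standard counting argument then yields that if any nontrivial zero-output input exists, one already exists for $T = n - p' + 1$.

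The main obstacle is that last step: giving a clean justification that the specific bound $n - p' + 1$, rather than, say, $n + 1$, is the correct one. The underlying fact is the classical Silverman-style structural result on discrete-time LTI inversion, which bounds the inherent delay of an inverse system in terms of the state dimension and the number of inputs lacking direct feedthrough. I would invoke this result (or equivalently an induction on the output-nulling subspace chain $\mathcal{V}_k$) to conclude, rather than re-deriving the subspace iteration in full.
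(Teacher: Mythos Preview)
Your proposal is correct and follows essentially the same route as the paper: reduce via Theorem~\ref{0attack} to the existence of a nonzero output-nulling input for the stacked system $(A,\hat{B}^a,C,\hat{D}^a)$, and then invoke the classical finite-horizon characterization of left invertibility (the paper simply cites Corollary~1 of \cite{Willsky1974}) to handle both directions together with the bound $T \ge n - p' + 1$. Your additional exposition of the Toeplitz/output-nulling subspace argument is just an unpacking of what that citation provides.
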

\begin{proof}
The result follows directly from Theorem \ref{0attack} and  Corollary 1 of \cite{Willsky1974}.
\end{proof}
Left invertibility in control systems has been well studied in previous work in CPS security as a subset of zero dynamic attacks \cite{PasqualettiAttack}. Our general framework of information flows is able to recover this property and consequently, we can directly apply previous results related to left invertibility in our study of $0$-weak information flows. For instance, we can consider conditions on $\mathcal{M}$ which allow for the existence of 0 information flow attacks to motivate resilient design of the system $(A,B,C)$ and channel security $(B^a,D^a)$.
\begin{theorem}
\label{matrixpencil}
\cite{PasqualettiAttack}  Let $T \ge n - p^{\prime} + 1$. An attack policy can create a $\mathcal{M}$ conditional $0$-weak information flow if and only if $\mbox{rank} \left( \bar{P}(\mathcal{M}) \right) < n + p^\prime + m^\prime,  ~~~ \forall~\lambda \in \mathbb{C}$
\begin{align}
 \mbox{ where } \bar{P}(\mathcal{M}) =  \begin{bmatrix} \lambda I - A &  \hat{B}^a \\  C &  \hat{D}^a \end{bmatrix}.  \nonumber
\end{align}
\end{theorem}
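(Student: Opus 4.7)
The plan is to leverage the reduction already carried out in the previous theorem and to supply the missing PBH-style bridge to the rank condition on $\bar P(\mathcal M)$. By Theorem~\ref{0attack}, the existence of a nonzero $0$-weak information flow attack is equivalent to the existence of a nontrivial pair $(u^a_{0:T-1}, d^a_{0:T})$ driving \eqref{eq:delx}--\eqref{eq:delyy} from $\Delta x_0 = 0$ along a trajectory with $\Delta y_k = 0$ for all $k$; written in terms of the augmented matrices $\hat B^a$ and $\hat D^a$, this is exactly a nontrivial zero-output input of the LTI system $(A, \hat B^a, C, \hat D^a)$ from the zero initial state. My task is therefore to translate this time-domain condition into the spectral condition on $\bar P(\lambda)$.

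For the forward direction I would take the $\mathcal{Z}$-transform of \eqref{eq:delx}--\eqref{eq:delyy}. A nontrivial zero-output attack from $\Delta x_0 = 0$ becomes a nonzero rational vector $(\hat{\Delta x}(\lambda), \hat v(\lambda))$ in the kernel of $\bar P(\lambda)$, so the normal rank of $\bar P$ over $\mathbb C(\lambda)$ is strictly less than $n + p' + m'$. Since the pointwise rank of a polynomial matrix never exceeds its normal rank and equals it at all but finitely many $\lambda$, normal-rank deficiency is equivalent to $\mbox{rank}(\bar P(\lambda)) < n + p' + m'$ at every $\lambda \in \mathbb C$. Conversely, starting from normal-rank deficiency, I would clear denominators in a rational null vector to obtain a polynomial null vector of $\bar P$; the inverse $\mathcal{Z}$-transform then produces a finite-support input sequence that drives the augmented system along $\Delta x_0 = 0$, $\Delta y_k \equiv 0$, i.e., a concrete attack.

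The subtle ingredient is the horizon bound $T \ge n - p' + 1$. Over an unbounded horizon the equivalence above is precisely the Popov--Belevitch--Hautus test for left invertibility of the augmented system; what remains is to argue that whenever any infinite-horizon zero-output input exists, one is realized within $n - p' + 1$ steps. For this I would invoke Willsky's structural result (Corollary~1 of \cite{Willsky1974}): the weakly unobservable subspace of $(A, \hat B^a, C, \hat D^a)$ is computed by a nested iteration whose fixed point is reached in at most $n - p' + 1$ steps, which both bounds the horizon needed and ensures tightness. This horizon estimate is the main obstacle and the only nontrivial input beyond standard frequency-domain manipulation; the rest of the argument simply assembles well-known pieces in the order above.
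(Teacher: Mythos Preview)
The paper does not prove this theorem; it is quoted directly from \cite{PasqualettiAttack} with no argument, relying on the preceding theorem (which, via Theorem~\ref{0attack} and Corollary~1 of \cite{Willsky1974}, already equates a $0$-weak information flow attack with non-left-invertibility of $(A,\hat B^a,C,\hat D^a)$) and then outsourcing the pencil-rank characterization of left invertibility to the cited reference. Your proposal is correct and strictly more informative than what the paper offers: you supply the Rosenbrock bridge yourself by passing to the $\mathcal{Z}$-transform, identifying zero-output inputs from $\Delta x_0=0$ with polynomial null vectors of $\bar P(\lambda)$, and invoking the standard fact that normal-rank deficiency of a polynomial matrix is equivalent to pointwise rank deficiency at every $\lambda\in\mathbb{C}$. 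Your appeal to Willsky for the horizon bound $T\ge n-p'+1$ is exactly how the paper deploys that citation in the preceding theorem. The only place requiring care in a full write-up is the boundary terms that arise when $\mathcal{Z}$-transforming a finite-horizon sequence against the recursion \eqref{eq:delx}; these are routine to handle (e.g., by extending the input by zeros or working with the block-Toeplitz Markov-parameter matrix), but should be made explicit.
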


We now wish to consider the case of $\mathcal{M}$ conditional $\epsilon$-weak information flows for $\epsilon > 0$. However, we assume that the adversary injects additive inputs which are independent of the defender's system outputs. Thus, we assume
\begin{align}
u_k^a &= \mathcal{U}_k^a(u_{0:k-1}^a,d_{0:k}^a,\hat{\mathcal{M}},\hat{\mathcal{C}},\hat{\mathcal{D}}), \nonumber \\
d_k^a &= \mathcal{D}_k^a(u_{0:k-1}^a,d_{0:k-1}^a,\hat{\mathcal{M}},\hat{\mathcal{C}},\hat{\mathcal{D}}). \label{eq:fdipolicy}
\end{align}
Such attacks are known as false data injection attacks. We now have the following result.
\begin{theorem}
\label{DELZ}
Consider an admissible adversarial policy which satisfies \eqref{eq:fdipolicy}. Then,
\begin{equation}
IF_T =  \frac{1}{2(T+1)} \Delta z_{0:T}^T \Delta z_{0:T},
\end{equation}
where $\Delta z_k$ satisfies $\Delta e_{0|-1} = 0$ and
\begin{align}
&\Delta e_{k+1|k} = (A-AK_kC) \Delta e_{k|k-1} + B^au_k^a - AK_kD^ad_k^a, \nonumber\\ 
&\Delta z_k = (CP_{k|k-1}C^T+R)^{-\frac{1}{2}} \left(C \Delta e_{k|k-1} + D^ad_k^a \right). \label{eq:delta_est}
\end{align}
\end{theorem}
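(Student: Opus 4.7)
The plan is to reduce the statement to a Gaussian KL computation via Theorem \ref{residueKL}, and then to show that under the false data injection restriction \eqref{eq:fdipolicy} the residue sequence under attack is obtained from the nominal residue sequence by a purely deterministic shift equal to $\Delta z_{0:T}$.

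First I would invoke Theorem \ref{residue_output} to replace the output-based KL divergence defining $IF_T$ by the residue-based KL divergence of $z_{0:T}$. Then I would compute the one-step estimation-error recursion $e_{k+1|k} = x_{k+1} - \hat{x}_{k+1|k}$ from \eqref{eq:attackdyn}--\eqref{eq:outputdyn} and \eqref{eq:stateestimate} \emph{with and without} the attack inputs. Crucially, the term $Bu_k$ cancels between $x_{k+1}$ and $\hat{x}_{k+1|k}$, so the standard innovation-form error dynamics are invariant under the defender's control law even though that control law does depend on the (possibly altered) observations. Consequently, writing $e_{k|k-1}^{a}=e_{k|k-1}+\Delta e_{k|k-1}$, the same noise realization $(w_k,v_k)$ drives both the nominal and attacked error, and $\Delta e_{k|k-1}$ satisfies exactly the recursion in \eqref{eq:delta_est} with $\Delta e_{0|-1}=0$.

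Substituting into \eqref{eq:residue} gives $z_k^{a}=\tilde z_k+\Delta z_k$, with $\tilde z_k \triangleq (CP_{k|k-1}C^{T}+R)^{-1/2}(Ce_{k|k-1}+v_k)$. By Lemma \ref{lem:residue} the nominal process $\tilde z_{0:T}$ has the standard normal distribution $\mathcal N(0,I)$. Under the assumption \eqref{eq:fdipolicy} the attacker's inputs are measurable functions of $(u_{0:k-1}^a,d_{0:k-1}^a,\hat{\mathcal M},\hat{\mathcal C},\hat{\mathcal D})$ alone, so by induction on $k$ the increments $\Delta e_{k|k-1}$ and hence $\Delta z_k$ are deterministic. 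Therefore the distribution of $z_{0:T}$ under attack is $\mathcal N(\Delta z_{0:T},\,I)$ while under normal operation it is $\mathcal N(0,I)$.

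Finally I would apply the closed-form Gaussian KL formula \eqref{KLnormal} with $\Sigma_1=\Sigma_0=I$ and $\mu_1-\mu_0=\Delta z_{0:T}$, which collapses everything except the Mahalanobis term and yields $\frac12\Delta z_{0:T}^{T}\Delta z_{0:T}$; dividing by $T+1$ gives the claimed identity. The main obstacle, and the step that actually needs care, is the second paragraph: one must verify that the defender's feedback policy does not couple $\Delta e_{k|k-1}$ to the realized nominal noise (which would destroy the deterministic-shift structure). This is where the cancellation of the $Bu_k$ term in the innovation and the independence of the attack from $y_{0:k}^{pu}$ built into \eqref{eq:fdipolicy} both get used simultaneously; everything else is routine.
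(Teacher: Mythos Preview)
Your proposal is correct and follows essentially the same route as the paper: derive the innovation-form error recursion under attack, subtract the nominal recursion to get the deterministic $\Delta e_{k|k-1}$ dynamics (using that $Bu_k$ cancels and that \eqref{eq:fdipolicy} makes the attack inputs deterministic), conclude $z_{0:T}^a\sim\mathcal N(\Delta z_{0:T},I)$, and apply \eqref{KLnormal}. If anything you are more explicit than the paper about why the defender's feedback does not couple $\Delta e_{k|k-1}$ to the noise, which is the only delicate step.
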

\begin{proof}
See Appendix \ref{app:fd}.
\end{proof}
Thus, the information flow is proportional to the norm of $\Delta z_k$ squared where $\Delta z_k$ represents the bias the adversary injects on the normalized residue. The norm of the residue bias has been previously used as a measure of the stealthiness in false data injection attacks. For instance, \cite{moscs10security} and \cite{mo2012integrity}, in their investigation of false data injection attacks, restrict
\begin{equation}
\|\Delta z_k\|^2 \le B ~~\forall k.
\end{equation}
with the motivation that the increase in $\beta_k$ will be bounded by some $B^{\prime}$ in this scenario. For $B \le 2\epsilon$, such an attacker generates a $\mathcal{M}$ conditional $\epsilon$-weak information flow. Consequently we have the following result.
\begin{theorem}
Suppose a false data injection attack satisfies $\|\Delta z_k\|^2 \le 2\epsilon~ ~\forall k$. Then, for $\delta > 0$ there exists a detector such that $\beta_k \ge 1-\delta$ and $\underset{k \rightarrow \infty}{\limsup} -\frac{\log(\alpha_k)}{k+1} = \epsilon$.
\end{theorem}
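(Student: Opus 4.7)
The plan is to reduce this result directly to the combination of Theorem \ref{DELZ} and Theorem \ref{stein} (Stein's lemma), doing only a short arithmetic bookkeeping step in between. The substantive work was already done in the derivation of $\mathrm{IF}_T$ in closed form for false data injection policies, so what remains is a bound-and-invoke argument.

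First I would use Theorem \ref{DELZ}, which asserts $\mathrm{IF}_T = \frac{1}{2(T+1)} \Delta z_{0:T}^T \Delta z_{0:T}$ for any admissible attack of the form \eqref{eq:fdipolicy}. Since $\Delta z_{0:T}^T \Delta z_{0:T} = \sum_{k=0}^{T} \|\Delta z_k\|^2$, the per-step hypothesis $\|\Delta z_k\|^2 \le 2\epsilon$ gives $\Delta z_{0:T}^T \Delta z_{0:T} \le 2\epsilon (T+1)$, and therefore $\mathrm{IF}_T \le \epsilon$ for every $T$. In particular $\limsup_{T \to \infty} \mathrm{IF}_T \le \epsilon$, and if the attack saturates the residue-bias budget (i.e.\ $\|\Delta z_k\|^2 = 2\epsilon$) then also $\lim_{T\to\infty} \mathrm{IF}_T = \epsilon$, which is the case relevant for the matching lower bound.

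Next I would feed this into Theorem \ref{stein}. The first clause, applied with $\limsup_{k\to\infty} \mathrm{IF}_k \ge \epsilon$ achieved by the saturating attack, guarantees existence of a detector $\Psi_k$ with $\beta_k \ge 1 - \delta$ and $\limsup_{k\to\infty} -\frac{1}{k+1}\log \alpha_k \ge \epsilon$; the second clause, applied with $\lim_{k\to\infty} \mathrm{IF}_k \le \epsilon$, establishes the reverse bound $\limsup_{k\to\infty} -\frac{1}{k+1}\log \alpha_k \le \epsilon$ for every detector satisfying the detection constraint. Combining the two yields the desired equality.

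The hardest part of this short proof is not the algebra but making sure the hypotheses of Theorem \ref{stein} are actually in force. The converse half requires the ergodicity assumption on the residue process $z_{0:T}$ under both hypotheses; for false data injection policies of the form \eqref{eq:fdipolicy} this follows because the under-attack residues in \eqref{eq:delta_est} are a deterministic shift of the unit-variance innovations plus a linear functional of the adversary's open-loop signals $(u_{0:T-1}^a, d_{0:T}^a)$, so the shifted distribution inherits the ergodicity of the baseline innovation sequence. I would state this as a remark and otherwise let the cited Stein-type result carry the weight.
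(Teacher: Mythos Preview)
Your approach matches the paper's intended argument: the paper does not give a standalone proof of this theorem at all, presenting it instead as an immediate consequence (``Consequently we have the following result'') of the preceding computation $\mathrm{IF}_T = \frac{1}{2(T+1)}\Delta z_{0:T}^T\Delta z_{0:T}$ (Theorem~\ref{DELZ}) together with Theorem~\ref{stein}. Your bound-and-invoke reduction is exactly that.

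You are in fact more careful than the paper on two points it glosses over. First, you correctly observe that the equality $\limsup_{k\to\infty}-\tfrac{1}{k+1}\log\alpha_k=\epsilon$ (rather than merely $\le\epsilon$) requires the attack to saturate the budget, i.e.\ $\|\Delta z_k\|^2=2\epsilon$; under the bare hypothesis $\|\Delta z_k\|^2\le 2\epsilon$ the achievable rate is only $\limsup_k \mathrm{IF}_k\le\epsilon$, so as literally stated the theorem should be read as a worst-case statement over the constraint set. Second, you flag the ergodicity hypothesis needed for the converse half of Theorem~\ref{stein} and sketch why it holds here (the attacked residues are a deterministic affine shift of the i.i.d.\ Gaussian innovations). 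Both refinements are appropriate; the paper's treatment is terse enough that neither is spelled out.
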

Again, the results obtained in \cite{moscs10security}, evaluating models $\mathcal{M}$ and attacks $\mathcal{D}_{0:k}^a$ which stealthily destabilize a system, and \cite{mo2012integrity}, estimating the bias an adversary can stealthily inject on the system state in $\mathcal{M}$,  can all be reframed as  attacks which generate $\mathcal{M}$ conditional $\epsilon$-weak information flow. This refinement of existing results allows us to now quantify detectability in addition to system impact.

\section{Active Detection of Adversarial Behavior} \label{sec:active}
In this section, we will revisit and extend results related to the active detection of replay attacks using the proposed measure of information flow. Recall that in active detection, the defender changes the system or his policy to elicit an information flow. Specifically, we will use information flows to determine when replay attacks are stealthy. We will then extend previous work by using information flows to characterize optimal detection with watermarking. 

In a replay attack, the adversary observes a sequence of measurements from $y_{-N}$ to $y_{-N+T-1}$. Then, without loss of generality, at time $0$, the attacker replays these measurements. Here, we will assume $-N$ is large so that the adversary has an adequate buffer and that the replayed outputs are independent of the current outputs. Moreover we assume the system at time $-N$ is in steady state. We first argue that a replay attack generates a $(\mathbf{M},\mathbf{U})$ conditional $\epsilon$-weak information flow for a large class of systems $\mathbf{M}$ and common control policies $\mathbf{U}$. For instance, consider a defender that uses state feedback with gain $L$ so $\mathcal{U}_k(\mathcal{I}_k) = L\hat{x}_{k|k}$. 

Let $\mathcal{A} = (A+BL)(I-KC)$  and $\mathcal{P} = CPC^T+R$. It has been shown that \cite{Chabukswar2014}
\begin{equation}
z_k = z_{k-N} - \mathcal{P}^{-\frac{1}{2}}C\mathcal{A}^k(\hat{x}_{0|-1} - \hat{x}_{-N|-N-1}). \label{eq:residuereplay}
\end{equation}
If $\mathcal{M}$ and $\mathcal{U}_{0:k-1}$ generate stable $\mathcal{A}$ the second term converges to 0. Therefore, we have the following result regarding the information flow with proof in appendix \ref{app:noflow}. 
\begin{theorem} 
 \label{infoflow_nowatermark}
Suppose that our control system \eqref{eq:dynamics} with state feedback control is under replay attack, where $\rho(\mathcal{A}) < 1$. Then, $\underset{T \rightarrow \infty}{\lim} IF_T = 0$.
\end{theorem}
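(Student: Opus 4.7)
First I would invoke Theorem~\ref{residue_output} to reduce the output KL divergence to the residue KL divergence, making the reference distribution $\mathcal{N}(0, I_{(T+1)m})$ by Lemma~\ref{lem:residue}. This turns the information flow into a tractable Gaussian comparison, at least conditionally.

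Writing $M_k := \mathcal{P}^{-1/2} C \mathcal{A}^k$ and $\Delta x_0 := \hat{x}_{0|-1} - \hat{x}_{-N|-N-1}$, relation~\eqref{eq:residuereplay} gives that under replay the defender's residues satisfy $z_k = \tilde z_k - M_k \Delta x_0$, where $\tilde z_k := z_{k-N}$ is the recorded residue drawn from the buffer. The steady-state hypothesis preceding~\eqref{eq:residuereplay}, combined with the standard Kalman innovation property, implies that $\tilde z_{0:T}$ is IID $\mathcal{N}(0,I)$ and independent of $\Delta x_0$. Hence, conditionally on $\Delta x_0$, the attack residues $z_{0:T}$ are independent Gaussians with identity covariance and means $-M_k \Delta x_0$.

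I would then appeal to convexity of $D_{KL}(\cdot \,\|\, \mathcal{N}(0,I))$ in its first argument to bound the marginal divergence by $\mathbb{E}_{\Delta x_0}[D_{KL}(P^{\mathrm{atk}}_{\Delta x_0} \,\|\, \mathcal{N}(0,I))]$, where $P^{\mathrm{atk}}_{\Delta x_0}$ is the conditional attack distribution of $z_{0:T}$. Formula~\eqref{KLnormal} collapses each inner divergence (equal covariance) to $\tfrac{1}{2} \sum_{k=0}^{T} \|M_k \Delta x_0\|^2$. Since $\rho(\mathcal{A}) < 1$, $\|\mathcal{A}^k\|$ decays geometrically, so $\kappa := \sum_{k=0}^{\infty} \|M_k\|^2$ is finite; and $\mathbb{E}[\|\Delta x_0\|^2]$ is finite because $\Delta x_0$ is Gaussian with bounded steady-state covariance. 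Combining these gives the $T$-independent bound $D_{KL} \le \tfrac{\kappa}{2} \mathbb{E}[\|\Delta x_0\|^2]$; dividing by $T+1$ and letting $T \to \infty$ yields $IF_T \to 0$.

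The hard part will be the probabilistic bookkeeping for the buffer, namely the joint distributional statement that $\tilde z_{0:T}$ is IID $\mathcal{N}(0,I)$ and independent of $\Delta x_0$. This rests on the steady-state assumption at time $-N$ together with the standard independence of Kalman innovations from the concurrent estimator, and needs to be spelled out rather than asserted. Once this is in place, the termwise geometric bounds are routine.
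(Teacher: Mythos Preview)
Your proposal is correct but takes a genuinely different route from the paper. The paper computes the \emph{marginal} Gaussian law of $z_{0:T}$ under attack directly: it writes $z_{0:T}\sim\mathcal{N}(\mu_r,\Sigma_r)$ with $\Sigma_r = I + (\text{low-rank term coming from }\hat{x}_{-N|-N-1})$, applies the closed-form Gaussian KL formula~\eqref{KLnormal}, and splits the result into a trace term, a mean-norm term, and a $\log\det$ term ($c_1,c_2,c_3$). Each of these is then bounded uniformly in $T$ using Lyapunov equations for $\mathcal{A}$ and Sylvester's determinant theorem for the $\log\det$ piece. You instead condition on $\Delta x_0$, which forces the conditional attack covariance to be the identity, and then invoke convexity of $D_{KL}(\cdot\,\|\,\mathcal{N}(0,I))$ to upper-bound the marginal divergence by the expected conditional divergence---a pure mean-shift quantity $\tfrac{1}{2}\sum_k\mathbb{E}\|M_k\Delta x_0\|^2$. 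This sidesteps the covariance bookkeeping entirely (no Sylvester, no $\log\det$ bounding), at the price of yielding only an upper bound rather than the exact KL. For the stated conclusion $IF_T\to 0$ the upper bound is all you need, and your argument is the lighter one; the paper's computation, though heavier, produces the exact divergence along the way. Note also that both approaches rest on the same probabilistic fact you flag as the ``hard part'': that the buffered innovations $z_{-N:-N+T}$ are IID $\mathcal{N}(0,I)$ and independent of $\hat{x}_{-N|-N-1}$. The paper uses this implicitly when writing down $\mu_r$ and $\Sigma_r$, so you are not assuming anything extra.
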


If $\mathcal{A}(\mathcal{M},\mathcal{U}_{0:k-1})$ is stable, the adversary's actions are asymptotically undetectable since the information flow is 0. This result was previously obtained in \cite{Mo2009R} by instead showing that continuous functions of the defender's information are indistinguishable under normal and replay scenarios. Information flows allow us to recover this result via a general CPS security framework.

In this example, the defender's control strategy $\mathcal{U}_{0:T-1}$ of state feedback, leaves the system vulnerable to a replay attack. The defender ideally should be able to perform active detection and determine a control strategy which simultaneously addresses system objectives while creating an information flow from a replay adversary. 

Watermarking techniques allow the defender to increase the information flow from the attacker input to defender output and as a result create an $\mathbf{U}^a$ conditional $\epsilon$-strong information flow, where $\mathbf{U}^a$ contains the replay attack policy. In watermarking, noisy control inputs are used with $u_k = \mathcal{U}_k(\mathcal{I}_k) = L\hat{x}_{k|k} + \Delta u_k$ where $\Delta u_k \sim \mathcal{N}(0,\mathcal{Q})$. Note that while the watermark is random, it can be predetermined offline so that $\mathcal{U}_k(\mathcal{I}_k)$ remains a deterministic function. We now show watermarking creates a strong information flow.

\begin{theorem}
Suppose the system \eqref{eq:dynamics} with state feedback control and watermarking is under replay attack, where $\rho(\mathcal{A}) < 1$. Then, almost surely $\underset{T \rightarrow \infty}{\lim} IF_T \ge \epsilon$, where
\begin{equation*}
\epsilon = \dfrac{ \mbox{tr} \left(\mathcal{P}^{-1}C\Sigma C^T\right)}{2}, ~~~\Sigma  = \mathcal{A}\Sigma\mathcal{A}^T + B\mathcal{Q}B^T.
\end{equation*}
\label{infoflow_watermark}
\end{theorem}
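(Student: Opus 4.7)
My plan is to combine Theorem \ref{residue_output} (work with residues instead of outputs), Theorem \ref{sumofKLS} (sum-of-marginal lower bound), the closed-form Gaussian KL from \eqref{KLnormal}, and an ergodic/strong-law argument applied to the stationary Gaussian process generated by the watermark. The key intermediate object is a replay-plus-watermarking analogue of \eqref{eq:residuereplay}.

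First I would derive that analogue. Under replay the defender receives $y_k^a = y_{k-N}$ and drives the Kalman filter with the current input $u_k = L\hat{x}_{k|k} + \Delta u_k$, whereas at time $k-N$ the original (unwatermarked, steady-state) filter was driven by $u_{k-N}^{\text{orig}} = L\hat{x}_{k-N|k-N}^{\text{orig}}$. Letting $e_k \triangleq \hat{x}_{k|k-1} - \hat{x}_{k-N|k-N-1}^{\text{orig}}$ and repeating the one-step prediction update of \eqref{eq:stateestimate} on both sides, the $y_{k-N}$ terms cancel and one obtains the driven Lyapunov recursion
\begin{equation*}
e_{k+1} = \mathcal{A} e_k + B \Delta u_k,\qquad z_k = z_{k-N} - \mathcal{P}^{-1/2} C e_k,
\end{equation*}
whose solution is $e_k = \mathcal{A}^k e_0 + \xi_k$ with $\xi_k \triangleq \sum_{j=0}^{k-1} \mathcal{A}^{k-1-j} B \Delta u_j$. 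Note that $z_{k-N}$ are the steady-state residues from the buffered period, hence IID $\mathcal{N}(0,I)$ and, crucially, independent of the (predetermined) watermark realization.

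Second, I would combine Theorems \ref{residue_output} and \ref{sumofKLS} to lower-bound $IF_T$ by the Cesaro average of marginal residue KL divergences. Conditioned on the fixed realization of the watermark $\Delta u_{0:T-1}$, the marginal distribution of $z_k^a$ is Gaussian: its mean is the deterministic quantity $-\mathcal{P}^{-1/2} C (\mathcal{A}^k e_0 + \xi_k)$ (with $e_0$ contribution absorbed into a vanishing transient) and its covariance tends to $I$ as $k\to\infty$ because $\rho(\mathcal{A})<1$. Plugging into \eqref{KLnormal} with $\Sigma_0=\Sigma_1=I$, $\mu_0=0$ gives marginal KL equal to $\tfrac{1}{2}\lVert \mathcal{P}^{-1/2} C (\mathcal{A}^k e_0 + \xi_k)\rVert^2$ plus a vanishing covariance-mismatch term. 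The Cesaro average of the transient $\mathcal{A}^k e_0$ piece vanishes deterministically because $\sum_k \lVert \mathcal{A}^k e_0\rVert^2$ is finite.

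Third, the heart of the argument is the ergodic step for the watermark-driven piece. Since $\Delta u_j$ is IID Gaussian and $\mathcal{A}$ is stable, $\{\xi_k\}$ is an asymptotically stationary Gaussian process whose stationary covariance is the unique solution $\Sigma$ of the discrete Lyapunov equation $\Sigma = \mathcal{A}\Sigma\mathcal{A}^T + B\mathcal{Q}B^T$ stated in the theorem. The strong law (or Birkhoff's theorem, using that a stable linear filter of IID noise is ergodic) applied to the nonnegative quadratic functional $\tfrac{1}{2}\xi_k^T C^T \mathcal{P}^{-1} C \xi_k$ yields
\begin{equation*}
\frac{1}{T+1}\sum_{k=0}^T \frac{1}{2}\xi_k^T C^T \mathcal{P}^{-1} C \xi_k \xrightarrow{a.s.} \frac{1}{2}\mathbb{E}[\xi^T C^T \mathcal{P}^{-1} C \xi] = \frac{1}{2}\mathrm{tr}(\mathcal{P}^{-1} C \Sigma C^T) = \epsilon.
\end{equation*}
Combining this with Step 2 gives $\liminf_{T\to\infty} IF_T \ge \epsilon$ almost surely over the watermark realization, which is the desired bound.

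The main obstacle is the almost-sure ergodic step: one has to justify a pointwise (in $\Delta u$) law of large numbers for a quadratic functional of a non-IID Gaussian process, and simultaneously control the cross terms between the decaying transient $\mathcal{A}^k e_0$ and the stationary $\xi_k$ as well as the small covariance mismatch in the marginal. Both are handled by the geometric decay from $\rho(\mathcal{A})<1$, but writing them cleanly so that they can be dropped inside the limit takes some care; the stationarity and ergodicity of $\xi_k$, together with finiteness of its stationary second moment $\Sigma$, are what make the pointwise limit well-defined.
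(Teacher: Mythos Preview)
Your route is essentially the paper's: invoke Theorem~\ref{sumofKLS} to reduce to marginal residue KL's, use the Gaussian KL formula, discard the covariance contribution, and apply a law of large numbers to the quadratic $\xi_k^{T} C^{T}\mathcal{P}^{-1}C\,\xi_k$ to obtain $\epsilon$.

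One modeling discrepancy is worth correcting. The paper (following \cite{Chabukswar2014}) takes the watermark to be active during the recorded window as well, so the residue recursion is driven by $B(\Delta u_k - \Delta u_{k-N})$, not just $B\Delta u_k$; the old watermark $\Delta u_{k-N}$ is unknown to the defender at detection time and therefore enters the \emph{covariance} of $z_k$, not the mean. Consequently the marginal covariance does not tend to $I$ but to $I+\mathcal{P}^{-1/2}C\Sigma C^{T}\mathcal{P}^{-1/2}$, so your ``vanishing covariance-mismatch'' claim does not hold under the paper's model. The paper handles this with the one-line inequality $\mathrm{tr}(\Sigma_k)-\log\det(I+\Sigma_k)\ge 0$, which makes the covariance portion of each marginal KL nonnegative and hence droppable for the lower bound. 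With that replacement for your ``vanishing'' step, your argument coincides with the paper's proof.
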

\begin{proof}
See Appendix \ref{app:flow}.
\end{proof}

From the theorem above, the defender can make the information flow from an adversarial input arbitrarily large by increasing  $\mbox{tr} \left(\mathcal{P}^{-1}C\Sigma C^T\right)$ which is a linear function of the watermark covariance $\mathcal{Q}$. In fact, previous work on watermarking \cite{Chabukswar2014} does aim to design watermarks by maximizing $\mbox{tr} \left(\mathcal{P}^{-1}C\Sigma C^T\right)$ subject to constraints on control performance in the system. Thus, our results motivate the choice of this objective function. The use of information flows also allow us to extend previous results to analyze optimal detection of replay attacks under watermarking scenarios.
\begin{corollary}
Assume system \eqref{eq:dynamics} with state feedback control and watermarking is under replay attack, where $\rho(\mathcal{A}) < 1$. Then for $\delta > 0$ there exists a detector such that $\beta_k \ge 1 - \delta,~\forall~k$ and
\begin{equation}
\underset{k \rightarrow \infty}{\limsup} -\frac{1}{k} \log(\alpha_k) \ge \dfrac{ \mbox{tr} \left(\mathcal{P}^{-1}C\Sigma C^T\right)}{2}.
\end{equation}
\end{corollary}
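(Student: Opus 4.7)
The corollary is essentially a direct composition of the two earlier results, so my proof plan is very short: stack Theorem \ref{infoflow_watermark} on top of Theorem \ref{stein}.

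First, I would invoke Theorem \ref{infoflow_watermark} under the stated hypotheses (the system \eqref{eq:dynamics}, state feedback with an additive Gaussian watermark, a replay adversary, and $\rho(\mathcal{A}) < 1$). This gives almost surely
\begin{equation*}
\lim_{T \to \infty} IF_T \ge \epsilon, \qquad \epsilon = \tfrac{1}{2}\operatorname{tr}(\mathcal{P}^{-1}C\Sigma C^T),
\end{equation*}
where $\Sigma$ is the unique solution of the discrete Lyapunov equation $\Sigma = \mathcal{A}\Sigma\mathcal{A}^T + B\mathcal{Q}B^T$ (well-defined because $\rho(\mathcal{A})<1$). Since the limit exists and bounds $\epsilon$ from above in the reverse direction, we automatically have $\limsup_{k \to \infty} IF_k \ge \epsilon$ almost surely, which is the hypothesis needed for the first half of Theorem \ref{stein}.

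Next, I would apply the first half of Theorem \ref{stein}, fixing the $\delta \in (0,1)$ from the corollary statement. That result produces a detector $\Psi_k$ satisfying $\beta_k \ge 1-\delta$ for all $k$ and
\begin{equation*}
\limsup_{k \to \infty} -\tfrac{1}{k+1}\log(\alpha_k) \ge \epsilon.
\end{equation*}
Since $-\tfrac{1}{k}\log(\alpha_k)$ and $-\tfrac{1}{k+1}\log(\alpha_k)$ differ by the factor $(k+1)/k \to 1$, the $\limsup$ is unchanged, and we obtain the inequality stated in the corollary with the same $\epsilon$.

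There is essentially no main obstacle; the work was already done in the preceding theorems. The only detail worth flagging is the almost sure qualifier in Theorem \ref{infoflow_watermark}: because the watermark $\Delta u_k$ is predetermined offline, the sample path on which the information flow bound holds can be fixed before the detector is designed, so Theorem \ref{stein} can be applied pathwise without measurability issues. If one wanted to, the corollary statement could therefore be sharpened to say the detector exists for almost every realization of the watermark, but as written the almost sure conclusion on $\alpha_k$ follows directly.
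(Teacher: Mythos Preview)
Your proposal is correct and matches the paper's own proof, which simply states that the result follows from Theorems \ref{infoflow_watermark} and \ref{stein}. Your added remarks about the harmless $(k+1)/k$ factor and the almost sure qualifier are sensible elaborations on what the paper leaves implicit.
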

\begin{proof}
The result follows from Theorems \ref{infoflow_watermark} and \ref{stein}.
\end{proof}

We simulate a vehicle moving along a single axis \cite{moscs10security} under replay attack. Here, we assume that the defender obtains the gain $L$ using a linear quadratic Gaussian (LQG) controller which attempts minimize a cost $J$ given by
\begin{displaymath}
J = \lim_{T \rightarrow \infty} \frac{1}{T+1} \mathbb{E} \left[ \sum_{k=0}^T x_k^Tx_k + u_k^Tu_k \right].
\end{displaymath}
The LQG cost increases linearly with $\mathcal{Q}$. We select the covariance $\mathcal{Q}$ of the watermark so that $\Delta J$, the increased cost due to watermarking, is $40 \%$ of the optimal $J$. Here, we simulate the system $1000$ times over a horizon of 200 steps. We plot the average information flow in Fig \ref{infoflow_car}, both with watermarking and without watermarking. As expected from Theorem \ref{infoflow_nowatermark}, in the absence of watermarking, the information flow generated by a replay attack converges to 0. If physical watermarking is implemented, the information flow generated by an adversary has a lower bound $\epsilon$ which grows linearly with $\mathcal{Q}$. We implement a Neyman Pearson detector \cite{Cover:2006:EIT:1146355} and plot the average probability of false alarm and detection as a function of $k$ in Fig \ref{detection_car}. 

\begin{figure}
\begin{center}
\includegraphics[scale=0.4]{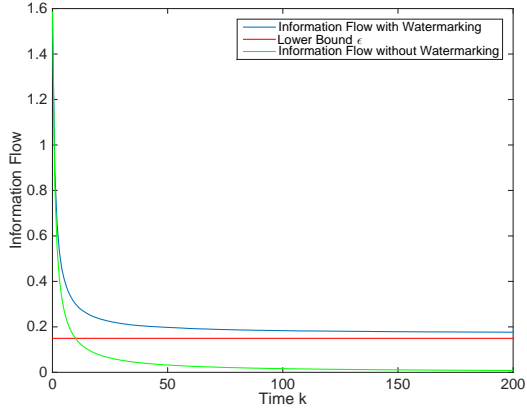}
\end{center}
\caption{Information Flow generated by a replay attack. The information flow as a function of $k$ in the presence of watermarking is included along with its lower bound $\epsilon$, and the information flow generated when physical watermarking is not present}
\label{infoflow_car}
\end{figure}

\begin{figure}
\begin{center}
\includegraphics[scale=0.4]{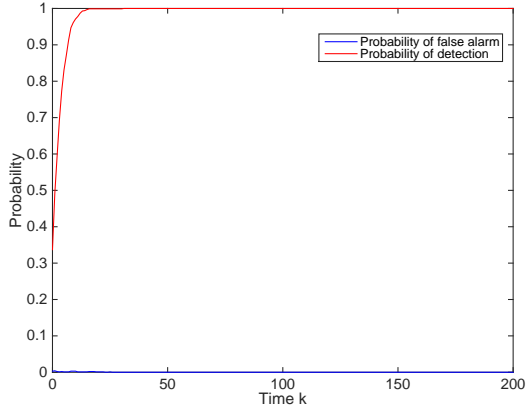}
\end{center}
\caption{Probability of detection and probability of false alarm vs time for a Neyman Pearson Detector}
\label{detection_car}
\end{figure}

\section{Conclusion} \label{sec:conclusion}
In this article, we introduced a physical measure of information flow to characterize detection in CPS and provide a unified approach to dealing with security in both the cyber and physical domains. We proposed the KL divergence as a measure of information flow. We motivate its use through results in optimal passive detection and computational ease of evaluation. We examined attacks which are stealthy for fixed models, and all input strategies, recovering results related to left invertibility and false data injection attacks. Finally, we investigated replay attacks and used information flows to quantify optimal detection performance with physical watermarking. We close by noting that information flow tools are amenable to true CPS analysis. In particular, we can consider a richer set of problems emcompassing both cyber and physical domains by leveraging the proposed results in physical security and existing parallels in cyber and software security. Approaching these general problems will mark the next stage of obtaining a unified paradigm for addressing CPS security.

\appendices

\section{Proof of Theorem \ref{DELZ}} \label{app:fd}
\begin{proof}
Let $e_{k|k-1} = x_k - \hat{x}_{k|k-1}$. From \eqref{eq:attackdyn},\eqref{eq:outputdyn}, and \eqref{eq:stateestimate} we obtain 
\begin{align*}
e_{k+1|k} =  (A-AK_kC) e_{k|k-1} + B^au_k^a &+ w_k  - AK_kv_k \\   &- AK_kD^ad_k^a, 
\end{align*}
\begin{equation*}
z_k = (CP_{k|k-1}C^T+R)^{-\frac{1}{2}} \left(C e_{k|k-1} + v_k +  D^ad_k^a \right).
\end{equation*}
Let $z_k^s$ be the residue under normal operation, where $\mathcal{U}_{0:k-1}^a = 0$ and $\mathcal{D}_{0:k}^a = 0$. Then,
\begin{equation*}
e_{k+1|k}^s =  (A-AK_kC) e_{k|k-1}^s  + w_k  - AK_kv_k,
\end{equation*}
\begin{equation*}
z_k^s = (CP_{k|k-1}C^T+R)^{-\frac{1}{2}} \left(C e_{k|k-1}^s + v_k  \right).
\end{equation*}
It can be seen from the linearity of the system that 
\begin{equation*}
z_k = z_k^s + \Delta z_k,
\end{equation*}
and that \eqref{eq:delta_est} holds. Moreover, from an inductive argument, we see that $\Delta z_k$ is a deterministic variable since  $\mathcal{U}_{0:k-1}^a$ and $\mathcal{D}_{0:k}^a$ are known functions of deterministic variables $\hat{\mathcal{M}}, \hat{\mathcal{C}}, \hat{\mathcal{D}}$. As a result, $\mathbb{D}_{z_{0:k}}^{\mathcal{M}, \mathcal{U}_{0:k-1}, \mathcal{U}_{0:k-1}^a, \mathcal{D}_{0:k}^a} = \mathcal{N}(\Delta z_{0:k}, I)$. Finally, from \eqref{KLnormal} and Theorem \ref{DELZ}, we have
\begin{equation*}
D_{KL}\left(\mathcal{N}(\mu_1,\Sigma_1),\mathcal{N}(\mu_2,\Sigma_1)\right) = \frac{1}{2} \| \Sigma_1^{-\frac{1}{2}} (\mu_1-\mu_2) \|^2.
\end{equation*}
The result immediately follows.
\end{proof}
\section{Proof of Theorem \ref{infoflow_nowatermark}} \label{app:noflow}
\begin{proof}
We observe from \eqref{eq:residuereplay} that
\begin{equation}
z_{0:k} \sim \mathcal{N}(\mu_r,\Sigma_r),
\end{equation}
\begin{equation}
\mu_r(jm:jm+m-1) = \mathbb{E}[z_j] = -\mathcal{P}^{-\frac{1}{2}}C\mathcal{A}^k\hat{x}_{0|-1},
\end{equation}
\begin{align}
\Sigma_r &(jm:jm+m-1,lm:lm+m-1) = \mbox{Cov}(z_j,z_l^T),  \nonumber \\ &= \mathcal{P}^{-\frac{1}{2}}C \mathcal{A}^j\mathcal{W}(\mathcal{A}^{l})^TC^T\mathcal{P}^{-\frac{1}{2}} + \delta(l-m)I,
\end{align}
where $\mathcal{W}$ is the steady state covariance of $\hat{x}_{k|k-1}$ and $\delta$ refers to the discrete delta dirac function. From \eqref{KLnormal}, Theorem \ref{residueKL}, and Sylvester's determinant theorem we have
\begin{equation*}
 D_{KL}(\mathbb{D}_{y_{0:k}}^{\mathcal{M}, \mathcal{U}_{0:k-1}, \mathcal{U}_{0:k-1}^a, \mathcal{D}_{0:k}^a} ||\mathbb{D}_{y_{0:k}}^{\mathcal{M}, \mathcal{U}_{0:k-1},0,0}) = \frac{c_1 + c_2 + c_3}{2}
\end{equation*}
where
\begin{align*}
c_1 &= \mbox{tr} \left( \sum_{j = 0}^k \mathcal{P}^{-\frac{1}{2}}C \mathcal{A}^j\mathcal{W}(\mathcal{A}^{j})^TC^T\mathcal{P}^{-\frac{1}{2}} \right), \\
c_2  &= \sum_{j = 0}^k \hat{x}_{0|-1}^T (\mathcal{A}^j)^TC^T\mathcal{P}^{-1}C\mathcal{A}^j\hat{x}_{0|-1},\\
c_3 &= -\log \det \left( I + \sum_{j=0}^k \mathcal{W}^{\frac{1}{2}} (\mathcal{A}^j)^TC^T\mathcal{P}^{-1}C\mathcal{A}^j \mathcal{W}^{\frac{1}{2}} \right).
\end{align*}
Let $X_1$ and $X_2$ be given by
\begin{equation*}
X_1 = \sum_{j = 0}^\infty  \mathcal{A}^j\mathcal{W}(\mathcal{A}^{j})^T = \mathcal{A} X_1 \mathcal{A}^T + W,
\end{equation*}
\begin{equation*}
X_2 = \sum_{j = 0}^\infty  (\mathcal{A}^j)^T C^T \mathcal{P}^{-1} C \mathcal{A}^{j} = \mathcal{A}^T X_2 \mathcal{A} +  C^T \mathcal{P}^{-1} C.
\end{equation*}
From Lyapunov's equation and since $\mathcal{A}$ is stable, the matrices $X_1$ and $X_2$ exist and are bounded. Since $c_1$, $c_2$, and $|c_3|$ are monotonic in $k$, we have for all $k$
\begin{equation*}
c_1 \le \mbox{tr} \left (\mathcal{P}^{-\frac{1}{2}} C X_1 C^T  \mathcal{P}^{-\frac{1}{2}}\right),~ c_2 \le  \hat{x}_{0|-1}^T X_2 \hat{x}_{0|-1}^T,
\end{equation*}
\begin{equation*}
|c_3| \le  \log \det \left(I + \mathcal{W}^{\frac{1}{2}} X_2 \mathcal{W}^{\frac{1}{2}} \right).
\end{equation*}
Consequently, for all $k$ there exists $M^*$ satisfying
\begin{equation*}
 D_{KL}(\mathbb{D}_{y_{0:k}}^{\mathcal{M}, \mathcal{U}_{0:k-1}, \mathcal{U}_{0:k-1}^a, \mathcal{D}_{0:k}^a} ||\mathbb{D}_{y_{0:k}}^{\mathcal{M}, \mathcal{U}_{0:k-1},0,0}) \le M^*,
\end{equation*}
 Dividing by $k+1$, the result follows.
\end{proof}
\section{Proof of Theorem 24} \label{app:flow}
\begin{proof}
When under a replay attack, we have  \cite{Chabukswar2014}
\begin{align}
z_k = z_{k-N} &- \mathcal{P}^{-\frac{1}{2}}C\mathcal{A}^k(\hat{x}_{0|-1} - \hat{x}_{-N|-N-1})  \\ &- \mathcal{P}^{-\frac{1}{2}}C \sum_{j = 0}^{k-1} \mathcal{A}^{k-1-j} B \left(\Delta u_j - \Delta u_{j-N} \right), \nonumber
\end{align}
where $N$ is some unknown, but large delay between the replayed sequence and the true sequence. Thus, under attack $z_k \sim \mathcal{N}(\mu_k,\Sigma_k+I)$ with 
\begin{equation*}
\mu_k  =  \mathcal{P}^{-\frac{1}{2}}C\mathcal{A}^k\hat{x}_{0|-1}  +  \mathcal{P}^{-\frac{1}{2}}C \sum_{j = 0}^{k-1} \mathcal{A}^{k-1-j} B \Delta u_j,
\end{equation*}
\begin{equation*}
\Sigma_k  =   \mathcal{P}^{-\frac{1}{2}}C [\mathcal{A}^k W \mathcal{A}^{k~T} + \sum_{j=0}^{k-1} \mathcal{A}^jB\mathcal{Q}B^T\mathcal{A}^{j~T} ] C^T  \mathcal{P}^{-\frac{1}{2}}.
\end{equation*}
Thus, the KL divergence between $z_k$ under attack and under normal operation is given by
\begin{equation}
D_{KL}(\mathbb{D}_{z_{k}}^{\mathcal{M}, \mathcal{U}_{0:k-1}, \mathcal{U}_{0:k-1}^a, \mathcal{D}_{0:k}^a} ||\mathbb{D}_{z_{k}}^{\mathcal{M}, \mathcal{U}_{0:k-1},0,0}) = \frac{c_k^1 + c_k^2 + c_k^3}{2} \label{eq:KLreplay}
\end{equation}
where 
\begin{equation*}
c_k^1 =  \mu_k^T\mu_k,~~c_k^2 =-\log\det \left( I + \Sigma_k \right ), ~~ c_k^3 = \mbox{tr}(\Sigma_k).
\end{equation*}
From \cite{Mo2014CSM}, it is known that 
\begin{equation}
c_k^2 + c_k^3 \ge 0. \label{eq:ck23}
\end{equation}
Furthermore, by the law of large numbers, we know 
\begin{equation}
 \lim_{T \rightarrow \infty} \frac{1}{T+1} \sum_{k=0}^{T}c_k^1 \overset{a.s.}{\rightarrow}  \mbox{tr} \left(\mathcal{P}^{-1}C\Sigma C^T\right). \label{eq:c1k}
\end{equation}
Using \eqref{eq:KLreplay}, \eqref{eq:ck23} and \eqref{eq:c1k} 
\begin{equation}
 \underset{T \rightarrow \infty}{\lim}  \sum_{k = 0}^T \frac{D_{KL}(\mathbb{D}_{z_{k}}^{\mathcal{M}, \mathcal{U}_{0:k-1}, \mathcal{U}_{0:k-1}^a, \mathcal{D}_{0:k}^a} ||\mathbb{D}_{z_{k}}^{\mathcal{M}, \mathcal{U}_{0:k-1},0,0})}{T+1} \ge \epsilon.
\end{equation}
By Theorem \ref{sumofKLS}, the result immediately follows.
\end{proof}

\bibliographystyle{IEEEtran}
\bibliography{CDC16_infoflow.bib}
\end{document}